\theoremstyle{plain}
\newtheorem{theorem}{Theorem}
\newtheorem{lemma}[theorem]{Lemma}
\theoremstyle{definition}
\DeclareMathOperator{\st}{\,s.t.}
\DeclareMathOperator{\argmax}{argmax}
\newcommand{\cbf}{\mathbf{c}}
\newcommand{\crlz}{\gamma}
\newcommand{\fot}[1]{[#1]}
\begin{document}

\title{The Stochastic Bilevel Continuous Knapsack Problem with
  Uncertain Follower's Objective\thanks{This work has partially been
    supported by Deutsche Forschungsgemeinschaft (DFG) under grants
    no.~BU 2313/2 and~BU 2313/6.}}

\author{Christoph Buchheim\thanks{Department of Mathematics, TU Dortmund University, Germany,\newline \texttt{\{christoph.buchheim,dorothee.henke\}@math.tu-dortmund.de}} \and Dorothee Henke\footnotemark[2] \and Jannik Irmai\thanks{Faculty of Computer Science, Dresden University of Technology, Germany,\newline \texttt{jannik.irmai@tu-dresden.de}}}

\date{}

\maketitle

\begin{abstract}
  We consider a bilevel continuous knapsack problem where the leader
  controls the capacity of the knapsack, while the follower chooses a
  feasible packing maximizing his own profit. The leader's aim is to
  optimize a linear objective function in the capacity and in the
  follower's solution, but with respect to different item values. We
  address a stochastic version of this problem where the follower's
  profits are uncertain from the leader's perspective, and only a
  probability distribution is known. Assuming that the leader aims at
  optimizing the expected value of her objective function, we first
  observe that the stochastic problem is tractable as long as the
  possible scenarios are given explicitly as part of the input, which
  also allows to deal with general distributions using a sample
  average approximation. For the case of independently and uniformly
  distributed item values, we show that the problem is \#P-hard in
  general, and the same is true even for evaluating the leader's
  objective function. Nevertheless, we present pseudo-polynomial time
  algorithms for this case, running in time linear in the total size
  of the items. Based on this, we derive an additive approximation
  scheme for the general case of independently distributed item
  values, which runs in pseudo-polynomial time.

  Keywords: bilevel optimization, stochastic optimization, complexity
\end{abstract}

\newpage

\section{Introduction}

In many real-world optimization problems, more than one decision-maker
is involved. Often, decisions are taken in a hierarchical order: the
first actor takes a decision that determines the feasible set and
objective function of the second actor, whose decision in turn may
influence the objective value of the first actor. Formally, such
problems may be modeled as \emph{bilevel} optimization problems, where
the problem solved by the first actor is called the \emph{upper level
  problem} and the one of the second actor the \emph{lower level
  problem}. When such bilevel problems are considered from a
game-theoretic perspective, the two decision-makers are often called
\emph{leader} and~\emph{follower}, a terminology we use throughout
this paper.  Typical applications of bilevel optimization arise when
the general rules of a system (e.g., an energy market or a transport
system) are determined by one actor (e.g., some regulatory commission
or some large logistics company) while the other actors (e.g., energy
producers or subcontractors) try to optimize their own objectives
within the rules of the system. For general introductions to bilevel
optimization, we refer
to~\cite{Colsonetal,Dempe_bibliography,Dempeetal}.

Bilevel optimization problems often turn out to be NP-hard. This is
the case, in general, even when all constraints and objective
functions are linear~\cite{Hansenetal}. A notable exception is the
bilevel continuous knapsack problem~\cite{Dempeetal}: here, the leader
controls the capacity of the knapsack, while the follower chooses a
feasible packing maximizing his own profit. The leader's aim is to
optimize an objective function that is linear in the chosen capacity
and in the follower's chosen solution. However, the leader's item
values may differ from the follower's item values. This problem can be
solved efficiently by first sorting the items according to the
follower's profits and then enumerating the capacities corresponding
to the total size of each prefix in this ordering; see
Section~\ref{section_withoutuncertainty} for more details.
Applications of the bilevel knapsack problem considered here arise in,
e.g., revenue management~\cite{BrotcorneHanafiMansi}.

In practice, however, it is very likely that the leader does not know
the follower's subproblem exactly. It is thus natural to combine
bilevel optimization with optimization under uncertainty. To the best
of our knowledge, uncertain bilevel optimization problems were first
considered in~\cite{patriksson1999stochastic}. Recently, Burtscheidt
and Claus~\cite{Burtscheidt2020} investigated stochastic bilevel
linear optimization problems, dealing in particular with structural
properties of such problems. A thorough review of literature on
stochastic bilevel linear optimization can be found in~\cite{Henkel}.

Regarding the bilevel continuous knapsack problem under uncertainty,
the \emph{robust} optimization approach has been investigated in depth
in~\cite{robust_knapsack}. It is assumed that the vector of follower's
item values is unknown to the leader. This implies that the follower's
order of preference for the items is now uncertain. However, according
to the robust optimization paradigm, the leader knows a so-called
\emph{uncertainty set} containing all possible (or likely)
realizations of this vector. The aim is to find a capacity leading to
an optimal worst-case objective value over all these
realizations. Among other things, it is shown that the resulting
problem is still tractable under discrete or interval uncertainty --
the latter case being nontrivial here --, while it turns out to be
NP-hard for budgeted uncertainty (which is sometimes called Gamma
uncertainty) and for ellipsoidal uncertainty, among others. Complexity
questions for general robust bilevel optimization problems have been
settled in~\cite{robust_bilevel}.

In the following, we consider the \emph{stochastic} bilevel continuous
knapsack problem. The follower's item values are still unknown to the
leader, but now given by probability distributions. Instead of the
worst case, we are interested in optimizing the expected value. The
problem can be written as
\begin{equation}\tag{SP}\label{stochasticprob}
\begin{aligned}
  & \max & & \mathbb{E}_{\cbf}\left(d^\top x^\cbf-\delta b\right) \\
  & \st & & b \in [b^-, b^+] \\
  & & & x^\cbf \in
  \begin{aligned}[t]
    & \argmax & & \cbf^\top x \\
    & \st & & a^\top x \leq b \\
    & & & x\in[0,1]^n\;, \\
  \end{aligned} \\
\end{aligned}
\end{equation}
where~$b$ denotes the capacity determined by the leader
and~$x\in[0,1]^n$ are the optimization variables of the follower.
As~$\cbf$ is a random vector in the stochastic setting, the same is
true for the follower's optimum solution~$x^\cbf$ and the leader's
objective value~$d^\top x^\cbf-\delta b$. In the latter, the
vector~$d$ contains the leader's item values and~$\delta$ denotes the
cost for each unit of capacity provided by the leader. The leader's
aim is now to optimize the expected objective value of the latter. For
the sake of simplicity, we do not distinguish between the optimistic
and the pessimistic view (which are the two standard ways to handle
ambiguous follower's optimal solutions) in this formulation.  In
Section~\ref{section_basics} below, we argue why we may assume
uniqueness of the follower's optimal solution almost surely in the
stochastic setting.  Therefore, the results presented in this paper
hold in both the optimistic and the pessimistic setting.

\subsection{Outline and Overview of Results}

After discussing the deterministic problem version and introducing
notation and basic results in Section~\ref{section_preliminaries}, we
start investigating the computational complexity of the stochastic
optimization problem~\eqref{stochasticprob}, which of course depends
strongly on the underlying probability distribution of~$\cbf$. Under
the assumption that all possible realizations of the follower's
objective vector are given explicitly as part of the input, together
with their probabilities, we observe that the stochastic problem can
be solved efficiently; see Section~\ref{section_finite}. Using
standard methods, this result could be used to design a sample average
approximation scheme for arbitrary distributions.

Our main results apply to the case of independently distributed item
values. In the most basic setting, each item value is distributed
uniformly on either a finite set or an interval. In contrast to the
setting of Section~\ref{section_finite}, in the discrete case, the
input here contains the finite sets for all items, but not each of the
(exponentially many) possible realizations explicitly. Even in this
basic setting of independently and uniformly distributed item values,
we show that the stochastic problem turns out to be \#P-hard; see
Section~\ref{section_stochastichard}. It is thus unlikely that an
efficient algorithm exists for solving the problem exactly. In fact,
even the computation of the objective value resulting from a given
capacity choice is \#P-hard in these cases, and the same is true for
finding a multiplicative approximation for any desired
factor. However, all results only show weak \#P-hardness. In fact, we
also devise a pseudo-polynomial algorithm for the mentioned cases in
Section~\ref{section_pseudo_alg}, running in time linear in the total
size of all items.

Finally, in Section~\ref{approximation_scheme}, we consider general
distributions with independent item values, given only by oracles for
the cumulative distribution functions and the quantile
functions. Assuming that these oracles can be queried in constant time,
we devise an algorithm for solving~\eqref{stochasticprob} with an
arbitrarily small additive error~$\varepsilon>0$. The running time of
this algorithm is pseudo-polynomial in the problem data and linear
in~$\nicefrac 1{\varepsilon}$. The idea of this approach is to
approximate the given distribution by a componentwise discrete
distribution and then to apply the main ideas used for the
pseudo-polynomial algorithm of Section~\ref{section_pseudo_alg}.

In Section~\ref{section_conclusion}, we summarize the main results and
close the paper with a few remarks and a discussion of related
questions.

\section{Preliminaries}\label{section_preliminaries}

We start with basic observations concerning the stochastic bilevel
continuous knapsack problem. We first have a closer look at the
deterministic problem variant in
Section~\ref{section_withoutuncertainty}. Subsequently, in
Section~\ref{section_basics}, we introduce notation and list some
basic observations concerning the stochastic
problem~\eqref{stochasticprob}.

\subsection{The Underlying Certain Problem}\label{section_withoutuncertainty}

As mentioned in the introduction, the deterministic version of the
bilevel continuous knapsack problem can be solved efficiently. This is
explained in~\cite{Dempeetal}, but for the convenience of the reader
and since our algorithms for the stochastic case build on this, we now
describe the solution approach in more detail. The deterministic
bilevel continuous knapsack problem can be formulated as follows,
using the same notation as in the stochastic
problem~\eqref{stochasticprob}:
\begin{equation}\tag{P}\label{certainprob}
\begin{aligned}
  & \max & & d^\top x - \delta b\\
  & \st & & b \in [b^-, b^+] \\
  & & & x \in
  \begin{aligned}[t]
    & \argmax & & c^\top x\\
    & \st & & a^\top x \leq b \\
    & & & x\in[0,1]^n\;. \\
  \end{aligned} \\
\end{aligned}
\end{equation}
The leader's only variable is $b \in \mathbb{R}$, which can be
considered the knapsack's capacity. The follower's variables are $x
\in \mathbb{R}^n$, i.e., the follower fills the knapsack with a subset
of the objects, where also fractions are allowed. The item sizes $a
\in \mathbb{R}_{\geq 0}^n$, the follower's item values $c \in
\mathbb{R}^n$, the capacity bounds $b^-, b^+ \in \mathbb{R}_{\geq 0}$
as well as the leader's item values~$d \in \mathbb{R}^n$ and a
scalar~$\delta \geq 0$ are given. The latter can be thought of as a
price the leader has to pay for providing one unit of knapsack
capacity.  For the following, we define~$A:=\sum_{i=1}^n a_i$ and
assume that $a>0$ and $0 \leq b^- \leq b^+ \leq A$. Moreover, since we
are mostly interested in complexity results, we assume throughout
that~$a\in\mathbb{N}^n$ and~$d\in\mathbb{Z}^n$.  Finally, we will use
the notation~$\fot k:=\{1,\dots,k\}$ for~$k\in\mathbb{N}$ and~$[0] :=
\emptyset$.

As usual in bilevel optimization, we have to be careful in case the
follower's optimal solution is not unique. In this case, the
model~\eqref{certainprob} is not well-defined. The standard approach
is to distinguish between the optimistic setting, in which the
follower always chooses one of his optimal solutions that is best
possible for the leader, and the pessimistic setting, where the
follower chooses an optimal solution that is worst possible for the
leader. The former case is equivalent to considering the follower's
variables~$x$ being under the leader's control as well. However,
regarding the results presented in this paper, there are no relevant
discrepancies between the two cases. For the sake of simplicity and
since we can also make this assumption almost surely in the stochastic
setting later on (see Section~\ref{section_basics}), we assume in this
section that the follower's profits~$\frac{c_i}{a_i}$, $i\in\fot n$,
are pairwise distinct and nonzero, so that his optimal solution is
unique for any capacity~$b$.

Indeed, the follower in~\eqref{certainprob} solves a continuous
knapsack problem with fixed capacity~$b$. This can be done, for
example, using Dantzig's algorithm~\cite{Dantzig}: by first sorting
the items, we may assume
\begin{equation}\label{eq:ordering}
  \frac{c_1}{a_1} > \dots > \frac{c_{n'}}{a_{n'}}> 0 >
  \frac{c_{n'+1}}{a_{n'+1}} > \dots > \frac{c_n}{a_n}\;,
\end{equation}
for some $n'\in \{0, \dots, n\}$. The idea is then to pack all items
with positive profit into the knapsack, in this order, until it is
full. More formally, if $A' :=\sum_{i=1}^{n'}a_i\le b$, all items with
positive profit can be taken, so an optimum solution is $x_i = 1$ for
$i \in \fot {n'}$ and~$x_i=0$ else. Otherwise, we consider the
critical item
\[\textstyle k := \min\left\{i \in \fot {n'} \colon \sum_{j = 1}^i a_j > b\right\}\;,\]
and an optimum solution is given by
\begin{equation}\label{eq:x_opt}
  x_i:=\begin{cases}\begin{array}{ll}
  1 & \text{ for }i \in \{1, \dots, k - 1\}\\
  \tfrac{1}{a_k}\left(b - \sum_{j = 1}^{k - 1} a_j\right) & \text{ for }i=k\\
  0 & \text{ for }i \in \{k+1, \dots, n\}\;.
\end{array}\end{cases}
\end{equation}

We now turn to the leader's perspective. As only the critical
item~$k$, but not the ordering~\eqref{eq:ordering} depends on $b$, the
leader can compute the described order of items once and then consider
the behavior of the follower's optimum solution~$x$ when $b$
changes. Every $x_i$ in~\eqref{eq:x_opt} is a continuous piecewise
linear function in $b$, of the form
\begin{equation}\label{eq:x_opt_b}
  x_i(b):=\begin{cases}
  0 & \text{ for }b \in \big[0, \sum_{j=1}^{i-1} a_j\big]\\
  \tfrac{1}{a_i}\left(b - \sum_{j = 1}^{i - 1} a_j\right) & \text{ for }b \in \big[\sum_{j=1}^{i-1} a_j, \sum_{j=1}^{i} a_j\big]\\
  1 & \text{ for }b \in \big[\sum_{j=1}^{i} a_j, A\big]
\end{cases}
\end{equation}
for~$b \in [0, A]$ and for~$i\in \fot{n'}$, and constantly zero
for~$i>n'$.  The leader's objective function $f$ is given by the
corresponding values $d^\top x(b)-\delta b$ and thus corresponds to a
weighted sum of the functions~$x_i(b)$ for $i \in \fot {n'}$
and~$\delta b$:
\begin{equation}\label{eq:certain_f}
  f(b) = 
  \begin{cases}
    \sum_{j = 1}^{i - 1} d_j + \frac{d_i}{a_i}\left(b - \sum_{j = 1}^{i - 1} a_j\right)-\delta b
    & \text{for }b\in[\sum_{j = 1}^{i - 1} a_j,\sum_{j = 1}^{i} a_j],\\
    & \phantom{for }i\in\fot {n'}\\[1.5ex]
    \sum_{j = 1}^{n'} d_j-\delta b
    & \text{for }b\in[A',A]\;.
  \end{cases}
\end{equation}
Note that this piecewise linear function is well-defined and
continuous with vertices in the points $b = \sum_{j = 1}^i a_j$, $i\in
\fot{n'}$, in which the critical item changes from~$i$ to~$i + 1$.
The leader has to maximize $f$ over the range $[b^-, b^+] \subseteq
[0, A]$. As $f$ is piecewise linear, it suffices to evaluate it at the
boundary points~$b^-$ and~$b^+$ and at all feasible vertices, i.e., at
all points~$b = \sum_{j = 1}^i a_j$ for~$i \in \fot {n'}$ such
that~$b\in[b^-,b^+]$. By computing~$f(b)$ incrementally,
Problem~\eqref{certainprob} can be solved in~$\mathcal{O}(n \log n)$
time, which is the time needed for sorting.

\subsection{Basic Definitions and Observations}\label{section_basics}

In the stochastic version of the problem, the vector $\cbf$ of
follower's item values is seen as a random variable having a known
distribution. The follower's optimum solution $x^\cbf(b)$ and the
leader's objective value $f^\cbf(b)=d^\top x^\cbf(b)-\delta b$ depend
on the realization of $\cbf$ and hence are also random variables. The
leader optimizes the expected value $\mathbb{E}_\cbf(f^\cbf(b))$.

If two different items have the same profit or an item has profit
zero, the follower's optimal solution might be ambiguous. For
simplicity, we assume throughout this paper that the profits of two
different items almost surely disagree and that the profit of each
item is almost surely nonzero, i.e.,
\begin{align}\label{eq:ass}
  \mathbb{P}(\cbf_i/a_i=\cbf_j/a_j)=0 \quad \text{and} \quad \mathbb{P}(\cbf_i = 0) = 0
  \quad \text{for } i,j \in \fot n, i\neq j\;.
\end{align}
If~$\cbf$ follows a continuous distribution, i.e., its cumulative
distribution function is continuous, this assumption is always
satisfied. In case of a discrete distribution with finite support, it
can be obtained, if necessary, by a small perturbation of the entries
in the support of~$\cbf$.  Using an appropriate perturbation, both the
optimistic and the pessimistic setting can be modeled. In particular,
we do not need to distinguish between these two settings in the
following because under Assumption \eqref{eq:ass}, the follower's
optimal solution is almost surely unique.

For fixed $c$, we have seen in
Section~\ref{section_withoutuncertainty} that $x^c_i$ for $i \in\fot
n$ and $f^c$ are piecewise linear functions in~$b$. These functions do
not depend on $c$ directly, but only on the implied order of the items
when the latter are sorted according to the values $c_i / a_i$. Hence,
the expected values
\begin{align}
\hat{x}_i(b) &:= \mathbb{E}_\cbf(x^\cbf_i(b)) \text{ for } i \in \fot n \label{eq:hatx}
\text{ and }\\ \hat f(b) &:= \mathbb{E}_\cbf(f^\cbf (b)) = d^\top \hat{x}(b)-\delta b \label{eq:f}
\end{align}
can be seen as expected values with respect to a probability
distribution on all permutations of the items $1, \dots, n$. As the
number of permutations is finite, this implies that the functions
$\hat{x}_i$ and $\hat f$ are finite convex combinations of functions
$x^c_i$ and $f^c$, respectively, for appropriate values of $c$. In
particular, they are piecewise linear functions again.  Since we
assume the item sizes $a$ to be integral, the vertices of these
functions all lie on integer points $b \in \{0, \dots, A\}$ because
this holds for the functions defined in~\eqref{eq:x_opt_b}
and~\eqref{eq:certain_f}.

This gives rise to a general algorithmic scheme for solving the
stochastic problem~\eqref{stochasticprob}: enumerate all
permutations~$\pi$ of~$\fot n$ and compute the corresponding leader's
objective functions~$f^\pi$ as in~\eqref{eq:certain_f}, together with
the probabilities $p_\pi$ that the values~$\cbf_i / a_i$ are sorted
decreasingly when permuted according to~$\pi$. Finally, sum all
piecewise linear functions~$p_\pi f^\pi$ to determine the leader's
objective $\hat f$, and maximize~$\hat f$ over~$b \in [b^-, b^+]$.

We emphasize that, depending on the given probability distribution
of~$\cbf$, it might be nontrivial to compute the probabilities $p_\pi$
in general. Moreover, due to the exponential number of permutations,
this approach does not yield a polynomial time algorithm in
general. In fact, we will show that such an efficient algorithm cannot
exist for some probability distributions unless P\,$=$\,NP. However,
for distributions with finite support, enumerating all values in the
support yields an efficient algorithm, as we show in the next section.
Although the above algorithm is not efficient for other distributions,
it will be useful for our proofs to know the structure of the
follower's optimum solutions and the leader's objective function as
described above.

Besides the piecewise linear functions defined in~\eqref{eq:hatx}
and~\eqref{eq:f}, we will also make use of the values
\begin{align}
\Delta\hat{x}_i(b) &:= \hat{x}_i(b) - \hat{x}_i(b - 1) \text{ for } i \in \fot n \label{eq:haty}
\text{ and } \\
\hat f'(b) &:= \hat f(b) - \hat f(b-1) = d^\top \Delta\hat{x}(b)-\delta \label{eq:slope}
\end{align}
for $b \in \fot A$. The values $\Delta\hat{x}_i(b)$ describe the
expected amount of item~$i$ that will be added when increasing the
capacity from $b-1$ to $b$. Together with~$\delta$ and the leader's
item values $d$, they yield the slope $\hat f'$ of the leader's
objective function. Note that, by integrality of $a$, the functions
$\hat{x}_i$ and $\hat f$ are linear on~$[b-1,b]$.  Whenever we deal
with slopes of piecewise linear functions, in a point where the
function is nondifferentiable, this refers to the slope of the linear
piece directly left of this point, i.e., we always consider left
derivatives.

For some probability distributions, it will turn out that not only the
optimization in~\eqref{stochasticprob}, but also the computation of
the values $\hat{x}_i(b)$ and $\Delta\hat{x}_i(b)$ is hard in
general. However, we will devise pseudo-polynomial time algorithms in
these cases that compute all values $\Delta\hat{x}_i(b)$, from which
one can solve~\eqref{stochasticprob} in pseudo-polynomial time as
well; see Section~\ref{section_pseudo_alg}.

\section{Distributions with Finite Support}\label{section_finite}

Assuming that~$\cbf$ has a finite support~$U$, i.e., that there exists
a finite set $U$ of possible follower's objectives~$c$ which occur
with probabilities $0 < p_c \leq 1$, respectively, the leader's
objective is given as a finite sum of the piecewise linear
functions~$p_cf^c$, where~$f^c$ is defined as
in~\eqref{eq:certain_f}. Note that the definition
in~\eqref{eq:certain_f} depends on the order of the items given by the
follower's preferences, which in turn depends on~$c$.

Similarly to the algorithm described in Section~\ref{section_basics},
the following algorithm solves Problem~\eqref{stochasticprob} for
distributions with finite support: for every~$c \in U$, use the
algorithm described in Section~\ref{section_withoutuncertainty} to
compute the piecewise linear function~$f^c$ in $\mathcal{O}(n \log n)$
time, and multiply each function~$f^c$ by the factor~$p_c$. Then
maximize the resulting weighted sum, which is a piecewise linear
function again. Note that the sum has $\mathcal{O}(|U|n)$ linear
segments and that it can be computed by sorting the vertices of all
functions and traversing them from left to right while keeping track
of the sum of the active linear pieces. This is possible in a running
time of $\mathcal{O}(|U| n \log(|U| n))$. Thus, we obtain
\begin{theorem}\label{theorem_stoch0}
  Assume that~$\cbf$ is distributed on a finite set $U$ and that the
  input consists of $U$ together with the corresponding
  probabilities. Then Problem~\eqref{stochasticprob} can be solved in
  $\mathcal{O}(|U| n \log(|U| n))$ time.
\end{theorem}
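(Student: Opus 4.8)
The plan is to reduce the stochastic problem to finitely many instances of the deterministic problem from Section~\ref{section_withoutuncertainty} and then to merge the resulting piecewise linear objective functions by a left-to-right sweep. First I would invoke the observations of Section~\ref{section_basics}: by linearity of expectation, $\hat f = \sum_{c\in U} p_c f^c$, where $f^c$ is the leader's objective function associated with the follower's value vector $c$ as in~\eqref{eq:certain_f}. Under Assumption~\eqref{eq:ass} the follower's order of preference is almost surely unique; for the finitely many vectors $c\in U$ this means, possibly after the small perturbation mentioned after~\eqref{eq:ass}, that the profits $c_i/a_i$ are pairwise distinct and nonzero, so each $f^c$ is well-defined and can be computed in $\mathcal{O}(n\log n)$ time by sorting the items according to $c$ and evaluating~\eqref{eq:certain_f}. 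Over all $c\in U$ this already costs $\mathcal{O}(|U|\,n\log n)$ time.

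Next I would form the weighted sum. Each $f^c$ is continuous and piecewise linear with at most $n$ breakpoints, located at the prefix sums $\sum_{j=1}^i a_j$ of the $c$-sorted item sizes, with a known slope on each piece; equivalently, $f^c$ is described by $\mathcal{O}(n)$ slope-change events together with its value at $b = 0$. Scaling by $p_c$ and summing over $c\in U$, the function $\hat f$ is continuous and piecewise linear with $\mathcal{O}(|U|\,n)$ breakpoints, and it is obtained by collecting the $\mathcal{O}(|U|\,n)$ slope-change events of all the $p_c f^c$, sorting them by $b$ in $\mathcal{O}(|U|\,n\log(|U|\,n))$ time, and sweeping from left to right while maintaining the current slope and the current function value. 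This yields an explicit description of $\hat f$ as a list of $\mathcal{O}(|U|\,n)$ linear segments within the same time bound.

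Finally, since $\hat f$ is piecewise linear, its maximum over the interval $[b^-, b^+] \subseteq [0, A]$ is attained either at one of the breakpoints lying in $[b^-, b^+]$ or at one of the endpoints $b^-, b^+$; evaluating $\hat f$ at these $\mathcal{O}(|U|\,n)$ candidate points, which can be done during the sweep, and keeping the best one solves~\eqref{stochasticprob}. Summing the costs gives the claimed $\mathcal{O}(|U|\,n\log(|U|\,n))$ bound. I do not expect a genuine obstacle here; the only points requiring care are the well-definedness of each $f^c$ (handled by the perturbation remark following~\eqref{eq:ass}) and the bookkeeping in the merging step, which must be organized so as to stay within the stated time bound rather than, say, $\mathcal{O}(|U|^2 n)$.
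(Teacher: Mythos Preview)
Your proposal is correct and follows essentially the same approach as the paper: compute each $f^c$ in $\mathcal{O}(n\log n)$ time, take the weighted sum $\hat f=\sum_{c\in U}p_c f^c$ as a piecewise linear function with $\mathcal{O}(|U|n)$ segments by sorting all breakpoints and sweeping left to right, and then maximize over $[b^-,b^+]$. The paper's argument is slightly terser but identical in substance, including the handling of well-definedness via Assumption~\eqref{eq:ass}.
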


The result of Theorem~\ref{theorem_stoch0} suggests to address the
general problem~\eqref{stochasticprob}, with an arbitrary underlying
distribution of~$\cbf$, by means of \emph{sample average
  approximation}: for a given number~$N\in\mathbb{N}$, first
compute~$N$ samples~$c^{(1)},\dots,c^{(N)}$ of the random
variable~$\cbf$. Then apply the algorithm of
Theorem~\ref{theorem_stoch0} to the uniform distribution over the
finite set~$\{c^{(1)},\dots,c^{(N)}\}$ and let~$\vartheta_N$ be the
resulting optimal value (which is a random variable again). Using
general results from~\cite{shapiro}, one can show that~$\vartheta_N$
almost surely converges to the optimal value of~\eqref{stochasticprob}
for~$N\to\infty$, and a similar statement holds for the set of optimal
solutions; see~\cite{warwel}. The only assumption needed here is that
sampling of~$\cbf$ is possible.

\section{Componentwise Uniform Distributions}\label{componentwise_uniform}

In this section, we consider the version of~\eqref{stochasticprob}
where the distribution of $\cbf$ is uniform on a product of either
finite sets or continuous intervals. Equivalently, each component of
$\cbf$ is drawn independently and according to some (discrete or
continuous) uniform distribution. For both the discrete and the
continuous case, we will show that~\eqref{stochasticprob} cannot be
solved efficiently unless~P\,$=$\,NP. However, we will devise
pseudo-polynomial time algorithms with a running time linear in the
total item size~$A$. The algorithm for the discrete case solves the
problem not only for uniform, but also for arbitrary componentwise
distributions with finite support.

Note that the results presented in Section~\ref{section_finite} are
not applicable to the discrete \emph{independent} case discussed here
because the support~$U$, which contains all possible combinations of
item values, is exponential in the number of items and hence in the
problem input.

\subsection{Hardness Results}\label{section_stochastichard}

Our first aim is to show that~\eqref{stochasticprob} is \#P-hard in
case of componentwise uniform distributions.  The class \#P contains
all counting problems associated with decision problems belonging to
NP, or, more formally, all problems that ask for computing the number
of accepting paths in a polynomial time nondeterministic Turing
machine. Using a natural concept of efficient reduction for counting
problems, one can define a counting problem to be \#P-hard if every
problem in \#P can be reduced to it. A polynomial time algorithm for a
\#P-hard counting problem can only exist if P\,$=$\,NP.  In the
following proofs, we will use the \#P-hardness of the problem
\#Knapsack, which asks for the number of feasible solutions of a given
binary knapsack instance~\cite{GareyJohnson}.

In stochastic optimization with continuous distributions, problems
often turn out to be \#P-hard, and this is often even true for the
evaluation of objective functions containing expected values. For an
example, see~\cite{Hanasusantoetal}, from where we also borrowed some
ideas for the following proofs.

\begin{theorem}\label{theorem_stoch1}
  Problem~\eqref{stochasticprob} with a discrete componentwise uniform
  distribution of~$\cbf$ is \#P-hard.
\end{theorem}

\begin{proof}
  We show the result by a reduction from \#Knapsack. More precisely,
  for some given $a^* \in \mathbb{N}^m$ and $b^*\in\{0,1,\dots,
  \sum_{i = 1}^m a_i^*\}$, we will prove that one can compute
  $$\#\{x \in \{0, 1\}^m \colon {a^*}^\top x \leq b^*\}$$
  in polynomial time if the following instances of
  \eqref{stochasticprob} can be solved in polynomial time. In
  case~$b^* = \sum_{i = 1}^m a_i^*$, this is clear, so from now on, we
  assume that~$b^* < \sum_{i = 1}^m a_i^*$.

  We define a family of instances of~\eqref{stochasticprob},
  parameterized by $\tau \in [-1, 1]$: each of the instances has $n :=
  m + 1$ items, where
  \begin{eqnarray*}
    (a_1, \dots, a_m, a_{m + 1}) &:=& (a_1^*, \dots, a_m^*,\textstyle \sum_{i = 1}^m a_i^*) \text{ and}\\
    (d_1, \dots, d_m, d_{m + 1}) &:=& ((1 +\tau) \cdot a_1, \dots, (1 + \tau) \cdot a_m, (-1 + \tau)
  \cdot a_{m + 1})\;.
  \end{eqnarray*}
  We set $\delta := 0$, $b^- := 0$ and $b^+ := a_{m + 1} = \sum_{i =
    1}^m a_i$, and assume
  $$(\cbf_1, \dots, \cbf_{m + 1}) \sim
  \mathcal{U}\{\varepsilon, 1\}^{m + 1}$$ with $$\varepsilon :=
  \tfrac{1}{2 a_{m + 1}}>0\;.$$

  The proof consists of two main steps. First, we investigate the
  structure of the leader's objective functions for the described
  instances and show that by determining the slope of any of them at
  $b = b^*$, up to a certain precision, we can compute
  $$\#\{x \in \{0, 1\}^m \colon {a^*}^\top x \leq b^*\}\;.$$
  In the second step, we show how to determine this slope up to the
  required precision by solving a polynomial number of these instances
  in a bisection algorithm.

  As described in Section~\ref{section_basics}, the leader's objective
  function can be thought of as a weighted sum of piecewise linear
  functions corresponding to the permutations induced by different
  choices of $c$, with weights being the probabilities of the
  permutations, respectively.  For fixed $c$, consider the set $$I_c
  := \left\{i \in\fot m \mid \tfrac{c_i}{a_i} > \tfrac{c_{m + 1}}{a_{m
        + 1}}\right\}$$ of items the follower would choose before item
  $m + 1$. The corresponding piecewise linear function $f^c$ first has
  slope $1 + \tau$ and then slope $-1 + \tau$, since $\frac{d_i}{a_i}
  = 1 + \tau$ for all $i \in\fot m$, and $\frac{d_{m + 1}}{a_{m + 1}}
  = -1 + \tau$. The order of the items in $I_c$ does not matter to the
  leader because they all result in the same slope in her
  objective. The slope changes from $1 + \tau$ to $-1 + \tau$ at $b =
  \sum_{i \in I_c} a_i$. The slope would change back to $1 + \tau$ at
  $b = \sum_{i \in I_c} a_i + a_{m + 1} \geq b^+$, but this is outside
  of the range of the leader's objective.

  The actual leader's objective is now a weighted sum of such
  functions. For obtaining the weights, we only need to know the
  probabilities for different sets~$I_\cbf$, because all $c$ resulting
  in the same $I_c$ also result in the same piecewise linear function
  $f^c$. The probability distribution is chosen such that $I_\cbf =
  [m]$ occurs with probability $\tfrac{1}{2} + \tfrac{1}{2^{m + 1}}$,
  while each other set $I_\cbf \subset \fot m$ has
  probability~$\tfrac{1}{2^{m + 1}}$: first, if $\cbf_{m + 1} =
  \varepsilon$, we certainly have $I_\cbf = [m]$ because
  $\tfrac{\cbf_i}{a_i} \geq \tfrac{\varepsilon}{a_i} >
  \tfrac{\varepsilon}{a_{m + 1}}$ holds with probability $1$, for all
  $i \in \fot m$.  On the other hand, if $\cbf_{m + 1} = 1$, then each
  item $i \in \fot m$ is contained in $I_{\cbf}$ with probability
  exactly $\tfrac{1}{2}$. Indeed, $\cbf_i = 1$ means
  $\tfrac{\cbf_i}{a_i} = \tfrac{1}{a_i} > \tfrac{1}{a_{m + 1}}$, hence
  $i \in I_\cbf$, whereas~$\cbf_i = \varepsilon$
  means~$\tfrac{\cbf_i}{a_i} = \tfrac{\varepsilon}{a_i} \leq
  \tfrac{1}{2 a_{m + 1}} < \tfrac{1}{a_{m + 1}}$, hence $i \notin
  I_\cbf$.  Thus, the leader's objective function $\hat f_\tau$ is
  given as
  $$\hat f_\tau = \frac{1}{2} f_{\fot m, \tau} + \frac{1}{2^{m + 1}} \sum_{M
    \subseteq \fot m} f_{M, \tau}\;,$$ where $f_{M, \tau}$ is the function
  that has slope $1 + \tau$ for $b \in [0, \sum_{i \in M} a_i]$ and
  slope $-1 + \tau$ afterward. It follows that, for any $b \in \fot{b^+}$,
  \begin{eqnarray}
    \hat f_\tau'(b) &=& \frac{1}{2} (1 + \tau) + \frac{1}{2^{m + 1}}\Bigg((-1 + \tau) \cdot \#\Big\{M \subseteq
    \fot m \colon \sum_{i \in M} a_i < b\Big\} \nonumber\\
    &&\hspace{8em} + (1 + \tau)
    \cdot \#\Big\{M \subseteq \fot m \colon \sum_{i \in M} a_i \ge
    b\Big\}\Bigg) \nonumber \\
    &=& -\frac{1}{2^m}\#\Big\{M \subseteq \fot m \colon \sum_{i
      \in M}
    a_i \leq b - 1\Big\} + 1 + \tau \nonumber\\
    &=& -\frac{1}{2^m}\#\Big\{x \in \{0, 1\}^m \colon a^\top x \leq b - 1\Big\} + 1
    + \tau\;. \label{slopeformula}
  \end{eqnarray}
  This shows that by computing $\hat f_\tau'(b^* + 1)$ for any fixed
  $\tau$, we can determine the number
  $$\#\{x \in \{0, 1\}^m \colon {a^*}^\top x\leq b^*\} = 2^m(1 + \tau - \hat f_\tau'(b^* + 1))\;.$$
  It is even enough to compute an interval of length less than
  $\frac{1}{2^m}$ containing $\hat f_\tau'(b^* + 1)$ because the
  number of feasible knapsack solutions is an integer and this gives
  an interval of length less than $1$ in which it must lie. This
  concludes the first step of our proof.

  In the second step, we will describe a bisection algorithm to
  compute $\hat f_0'(b^* + 1)$ up to the required precision. We know
  that $\hat f_0'(b^* + 1) \in [-1,1]$ as it is a convex combination
  of values $-1$ and $1$. Starting with $s_0^- := -1$ and $s_0^+ :=
  1$, we iteratively halve the length of the interval~$[s_k^-,s_k^+]$
  by setting either $s_{k + 1}^- := s_k^-$ and $s_{k + 1}^+ :=
  \tfrac12(s_k^- + s_k^+)$ or~$s_{k + 1}^- := \tfrac12(s_k^- + s_k^+)$
  and $s_{k + 1}^+ := s_k^+$. After $m + 2$ iterations, we have an
  interval of length $\frac{1}{2^{m + 1}}$ containing $\hat f_0'(b^* +
  1)$, which allows to compute $\#\{x \in \{0, 1\}^m \colon {a^*}^\top
  x \leq b^*\}$.

  It remains to show how to determine whether $\hat f_0'(b^* +
  1)\le\tfrac12(s_k^- + s_k^+)$ or $\hat f_0'(b^* +
  1)\ge\tfrac12(s_k^- + s_k^+)$, in order to choose the new
  interval. To this end, we first maximize~$f_\tau$ for~$\tau :=
  -\tfrac12(s_k^- + s_k^+)$ over~$[b^-, b^+]$. This can be done by
  solving~\eqref{stochasticprob} for the corresponding instance, which
  by our assumption is possible in polynomial time. Suppose the
  maximum is attained at $b_{k + 1}$.  As a weighted sum of concave
  functions, $\hat f_\tau$ is concave, and hence, we know that $\hat
  f_\tau'(b) \geq 0$ for all $b < b_{k + 1}$, and~$\hat f_\tau'(b)
  \leq 0$ for all $b \geq b_{k + 1}$. From \eqref{slopeformula}, one
  can conclude that $\hat f_\tau'(b) = \hat f_0'(b) + \tau$ for
  all~$\tau \in [-1, 1]$ and all $b \in \fot{b^+}$. We derive that
  $\hat f_0'(b^* + 1) \geq -\tau = \tfrac12(s_k^- + s_k^+)$ if $b^* +
  1 < b_{k + 1}$, and~$\hat f_0'(b^* + 1) \leq -\tau= \tfrac12(s_k^- +
  s_k^+)$ otherwise.
\end{proof}

\begin{theorem}\label{theorem_stoch2}
  Problem~\eqref{stochasticprob} with a continuous
  componentwise uniform distribution of~$\cbf$ is \#P-hard.
\end{theorem}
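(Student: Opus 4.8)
The plan is to imitate the proof of Theorem~\ref{theorem_stoch1} almost line by line, again reducing from \#Knapsack with the same two-step architecture: first exhibit a family of instances of~\eqref{stochasticprob}, parameterized by $\tau\in[-1,1]$, for which the slope $\hat f_\tau'(b^*+1)$ encodes $\#\{x\in\{0,1\}^m\colon{a^*}^\top x\le b^*\}$ up to a controllable error, and then recover that slope by a bisection over $\tau$ that solves a polynomial number of instances of~\eqref{stochasticprob}. The only change is the distribution of~$\cbf$, which now has to be componentwise continuous uniform; as before we may assume $b^*<\sum_{i=1}^m a_i^*$.

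For a given \#Knapsack instance I would keep all deterministic data exactly as in the proof of Theorem~\ref{theorem_stoch1}: $n:=m+1$, $a_i:=a_i^*$ for $i\in\fot m$, $a_{m+1}:=\sum_{i=1}^m a_i$, $\delta:=0$, $b^-:=0$, $b^+:=a_{m+1}$, $d_i:=(1+\tau)a_i$ for $i\in\fot m$, and $d_{m+1}:=(-1+\tau)a_{m+1}$. For the uncertainty I would take $\cbf_i$ uniform on $[0,a_i]$ for $i\in\fot m$, so that $\cbf_i/a_i\sim\mathcal U[0,1]$, and $\cbf_{m+1}$ uniform on the \emph{short} interval $[\tfrac12 a_{m+1}-\eta,\ \tfrac12 a_{m+1}+\eta]$, where $\eta>0$ is a parameter of polynomial bit size to be fixed later (e.g.\ $\eta:=2^{-m}$). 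The point of this choice is that $\cbf_{m+1}$ is almost deterministic and close to $\tfrac12 a_{m+1}$, so that, conditionally on $\cbf_{m+1}=\tilde c$, each item $i\in\fot m$ lies in $I_{\cbf}:=\{i\in\fot m\colon\cbf_i/a_i>\cbf_{m+1}/a_{m+1}\}$ independently with probability $1-\tilde c/a_{m+1}$, which is close to $\tfrac12$. With this, the entire structural part of the proof of Theorem~\ref{theorem_stoch1} goes through: $\hat f_\tau$ is a mixture of the functions $f_{M,\tau}$ having slope $1+\tau$ on $[0,\sum_{i\in M}a_i]$ and slope $-1+\tau$ afterwards ($M\subseteq\fot m$), the mixture weight of $f_{M,\tau}$ depends on $M$ only through $|M|$, and one gets $\hat f_\tau'(b)=1+\tau-2\,\mathbb P\big(\sum_{i\in I_{\cbf}}a_i\le b-1\big)$ for $b\in\fot{b^+}$; since the law of $I_{\cbf}$ is independent of $\tau$, also $\hat f_\tau'=\hat f_0'+\tau$ holds exactly as before.

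The one genuinely new ingredient is the error analysis that replaces the exact identity~\eqref{slopeformula}. Writing $N_k:=\#\{M\subseteq\fot m\colon|M|=k,\ \sum_{i\in M}a_i\le b^*\}$, one obtains $\mathbb P\big(\sum_{i\in I_{\cbf}}a_i\le b^*\big)=\sum_{k=0}^m N_k\,\mu_k$, where $\mu_k$ is the expectation of $(1-q)^k q^{m-k}$ for $q:=\cbf_{m+1}/a_{m+1}$. Substituting $\cbf_{m+1}=\tfrac12 a_{m+1}(1+\eta's)$ with $\eta':=2\eta/a_{m+1}$ and $s$ uniform on $[-1,1]$, so that $(1-q)^kq^{m-k}=2^{-m}(1-\eta's)^k(1+\eta's)^{m-k}$, and Taylor-expanding around $s=0$ --- where the linear term is odd and hence integrates to zero --- a routine bound on the second derivative on $[-\tfrac12,\tfrac12]$ gives $|2^m\mu_k-1|\le c\,m^2(3/2)^m(\eta')^2$ for an absolute constant $c$ and all $k\in\{0,\dots,m\}$. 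Since $\sum_k N_k=\#\{x\in\{0,1\}^m\colon{a^*}^\top x\le b^*\}\le 2^m$, it follows that $\big|\,2^m\,\mathbb P(\sum_{i\in I_{\cbf}}a_i\le b^*)-\#\{x\in\{0,1\}^m\colon{a^*}^\top x\le b^*\}\,\big|$ is at most $2^m c\,m^2(3/2)^m(\eta')^2$, which for $\eta:=2^{-m}$ is far below $\tfrac14$ (small instances being handled directly). Using $\hat f_\tau'=\hat f_0'+\tau$ one has $2^m\,\mathbb P(\sum_{i\in I_{\cbf}}a_i\le b^*)=2^{m-1}\big(1-\hat f_0'(b^*+1)\big)$, so it suffices to determine $\hat f_0'(b^*+1)$ up to an interval of length $\mathcal O(2^{-m})$ and then round $2^{m-1}\big(1-\hat f_0'(b^*+1)\big)$ to the nearest integer. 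Determining $\hat f_0'(b^*+1)$ is done by the same bisection as in the proof of Theorem~\ref{theorem_stoch1}: $\hat f_\tau$ is concave, so solving~\eqref{stochasticprob} for $\tau=-\tfrac12(s_k^-+s_k^+)$ and comparing the returned maximizer with $b^*+1$ reveals whether $\hat f_0'(b^*+1)\ge-\tau$ or $\hat f_0'(b^*+1)\le-\tau$, and $\mathcal O(m)$ iterations shrink the interval to the required length.

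I expect the only delicate point to be the quantitative estimate $|2^m\mu_k-1|\le c\,m^2(3/2)^m(\eta')^2$ together with the bookkeeping of how small $\eta$ must be --- it must remain of polynomial bit size --- and how precise the bisection must be, so that all accumulated errors stay below $1$ after multiplication by $2^{m-1}$. Everything else, including the concavity argument and the reduction of the count to a single slope value, is an immediate adaptation of the proof of Theorem~\ref{theorem_stoch1}.
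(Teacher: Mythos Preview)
Your proof is correct and follows the same two-step reduction from \#Knapsack as the paper. The main difference lies in the choice of distributions: the paper takes $\cbf_i \sim \mathcal{U}\big[\tfrac{a_i}{2a_{m+1}}, \tfrac{3a_i}{2a_{m+1}}\big]$ for $i \in \fot m$ and fixes $c_{m+1}=1$, so that each profit $\cbf_i/a_i$ is uniformly distributed on an interval centered \emph{exactly} at $c_{m+1}/a_{m+1}=\tfrac{1}{a_{m+1}}$. This makes every subset $I_\cbf \subseteq \fot m$ occur with probability exactly $2^{-m}$, so the analogue of~\eqref{slopeformula} holds without any error term and no perturbation analysis is needed at all. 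Your construction instead keeps $\cbf_{m+1}$ genuinely continuous on a short interval, which forces the error analysis via the moments $\mu_k$; this is precisely the work the paper defers to its closing remark about replacing the point mass by $\mathcal{U}[1-\varepsilon,1+\varepsilon]$. In short: the paper buys simplicity by tolerating one degenerate component, while your version is fully continuous from the outset at the cost of an extra, but routine, quantitative estimate.
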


\begin{proof}
  The result can be shown by a similar proof to the one of
  Theorem~\ref{theorem_stoch1}: instead of the discrete distribution
  used before, the continuous distribution
  $$(\cbf_1, \dots, \cbf_{m}) \sim \mathcal{U} \prod_{i =
    1}^{m}\left[\tfrac{a_i}{2a_{m + 1}}, \tfrac{3a_i}{2a_{m +
        1}}\right]$$ is considered, while fixing $c_{m + 1} = 1$. The
  sets $I_\cbf$ are defined as before, and it can be shown that each
  set has probability~$\tfrac 1{2^m}$: if $\cbf_i \in
  (\frac{2a_i}{2a_{m+1}}, \frac{3a_i}{2a_{m+1}}]$, we have that
  $\frac{\cbf_i}{a_i} > \frac{1}{a_{m+1}}$, hence $i \in I_\cbf$,
  while $\cbf_i \in [\frac{a_i}{2a_{m+1}}, \frac{2a_i}{2a_{m+1}})$
  implies $\frac{\cbf_i}{a_i} < \frac{1}{a_{m+1}}$, so that $i \notin
  I_\cbf$. Both events have probability $\nicefrac 12$. The rest of
  the proof is analogous, except for a small change in the computation
  in~\eqref{slopeformula} due to the slightly different probabilities.
\end{proof}

Note that~$c_{m+1}$ is fixed in the proof of
Theorem~\ref{theorem_stoch2}.  In order to avoid this, one can
consider $\cbf_{m+1} \sim \mathcal{U} [1 - \varepsilon, 1 +
\varepsilon]$ instead, for sufficiently small~$\varepsilon > 0$.

\begin{theorem} \label{theorem_eval} Evaluating the objective function
  of Problem~\eqref{stochasticprob} with a discrete or continuous
  componentwise uniform distribution of~$\cbf$ is \#P-hard.
\end{theorem}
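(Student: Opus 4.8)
The plan is to reuse the instance family constructed in the proof of Theorem~\ref{theorem_stoch1} and observe that we never actually needed the optimization step: the key identity~\eqref{slopeformula} expresses $\#\{x\in\{0,1\}^m\colon {a^*}^\top x\le b^*\}$ directly in terms of the single slope value $\hat f_\tau'(b^*+1)$, and slopes are differences of objective values at integer capacities. Concretely, fix $\tau:=0$ and take the instance with $n=m+1$ items, sizes $(a_1^*,\dots,a_m^*,\sum_i a_i^*)$, leader values $d_i=a_i$ for $i\le m$ and $d_{m+1}=-a_{m+1}$, $\delta:=0$, and the finite componentwise uniform distribution $\cbf\sim\mathcal U\{\varepsilon,1\}^{m+1}$ with $\varepsilon=\tfrac1{2a_{m+1}}$, exactly as before. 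By~\eqref{eq:slope} we have
\[
  \hat f'(b^*+1)=\hat f(b^*+1)-\hat f(b^*),
\]
so two evaluations of the leader's objective function (at the integer capacities $b=b^*$ and $b=b^*+1$, each of which lies in $[b^-,b^+]=[0,a_{m+1}]$ since we assumed $b^*<\sum_i a_i^*$) determine $\hat f_0'(b^*+1)$ exactly, and then~\eqref{slopeformula} with $\tau=0$ yields the #Knapsack count as $2^m\bigl(1-\hat f_0'(b^*+1)\bigr)$. This is a polynomial-time Turing reduction from #Knapsack to the evaluation problem, establishing #P-hardness for the finite componentwise uniform case. Note that here, unlike in Theorem~\ref{theorem_stoch1}, no bisection is needed because we obtain the slope exactly from two exact evaluations rather than only bracketing it via optimization.

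For the continuous componentwise uniform case one argues identically, using the instance from the proof of Theorem~\ref{theorem_stoch2}: distribution $(\cbf_1,\dots,\cbf_m)\sim\mathcal U\prod_{i=1}^m[\tfrac{a_i}{2a_{m+1}},\tfrac{3a_i}{2a_{m+1}}]$ with $\cbf_{m+1}\sim\mathcal U[1-\varepsilon,1+\varepsilon]$ for small $\varepsilon>0$ (or $c_{m+1}$ fixed to $1$ as noted there), again with the analogue of~\eqref{slopeformula} arising from the probability $2^{-m}$ of each set $I_\cbf$. Evaluating $\hat f$ at $b^*$ and $b^*+1$ and taking the difference again recovers the count, so the evaluation problem is #P-hard in this case as well.

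The only subtlety worth spelling out is that "evaluating the objective function" must be read as computing $\hat f(b)=\mathbb E_\cbf(f^\cbf(b))$ for a given capacity $b$ — not as solving the bilevel problem — so one should state this explicitly at the start of the proof. After that, the argument is essentially bookkeeping: check that the chosen $b^*$ and $b^*+1$ are feasible capacities, invoke the structural analysis of $\hat f$ and the slope formula~\eqref{slopeformula} (respectively its continuous analogue) already derived in the proofs of Theorems~\ref{theorem_stoch1} and~\ref{theorem_stoch2}, and conclude. I do not anticipate a genuine obstacle here; the work was already done in the two preceding proofs, and the present statement is the observation that those proofs secretly only used objective evaluations, made cleaner by the fact that integer-capacity slopes are exact finite differences of objective values.
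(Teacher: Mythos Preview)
Your proposal is correct and follows essentially the same approach as the paper's proof: reuse the constructions from Theorems~\ref{theorem_stoch1} and~\ref{theorem_stoch2}, note that~$\hat f_0'(b^*+1)=\hat f_0(b^*+1)-\hat f_0(b^*)$ by~\eqref{eq:slope}, and conclude via~\eqref{slopeformula} that two evaluations of~$\hat f_0$ suffice to recover the \#Knapsack count. The paper's version simply states this more tersely.
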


\begin{proof}
  Using the same construction and notation as in the preceding proofs,
  we have shown that computing~$\hat f'_0(b^* + 1)=\hat f_0(b^*+1)-\hat
  f_0(b^*)$ is \#P-hard. It follows that also evaluating $\hat f_0$ is
  \#P-hard.
\end{proof}

Note that the proof of Theorem~\ref{theorem_eval}, together
with~\eqref{eq:f} and~\eqref{eq:slope}, implies that already the
computation of the values $\hat{x}_{i}(b)$ and $\Delta\hat{x}_i(b)$ for
some given $i \in \fot n$ and $b \in \fot A$ can be
\#P-hard. Moreover, the constructions in the proofs of
Theorem~\ref{theorem_stoch1} and~\ref{theorem_stoch2} show that all
stated hardness results still hold when~$\delta=0$. One can show that,
since all follower's item values in these constructions are positive,
the hardness result also holds when assuming~$d\ge 0$ and~$\delta>0$;
see~\cite{robust_knapsack}.

Considering these results, it is a natural question
whether~\eqref{stochasticprob} can at least be approximated
efficiently. Assuming~$b^-=0$, this question is well-defined, since
all optimal values are nonnegative then, due to~$\hat
f(0)=0$. However, it is easy to derive the following negative result,
which excludes the existence of any polynomial time (multiplicative)
approximation algorithm, unless P\,$=$\,NP.

\begin{theorem}\label{theorem_noapprox}
  For Problem~\eqref{stochasticprob} with a discrete or continuous
  componentwise uniform distribution of~$\cbf$ and with~$b^-=0$, it is
  \#P-hard to decide whether the optimal value is zero.
\end{theorem}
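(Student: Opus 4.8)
The plan is to reduce from \#Knapsack, recycling the gadget behind Theorems~\ref{theorem_stoch1} and~\ref{theorem_stoch2} but with two changes: the low-priority items get leader value $0$, and the capacity cost $\delta$ --- which was $0$ there --- is now used as a tunable threshold. Given a \#Knapsack instance $(a^*,b^*)$ with $a^*\in\mathbb N^m$ (we may assume $a_i^*\ge 1$ and $b^*<A^*:=\sum_{i=1}^m a_i^*$, the remaining cases being trivial) and a parameter $k\in\mathbb N_0$, I would build the instance of~\eqref{stochasticprob} with $n=m+1$ items, sizes $a=(a_1^*,\dots,a_m^*,A^*)$, leader values $d=(0,\dots,0,A^*)$, cost $\delta=\frac{2k+1}{2^{m+1}(b^*+1)}$, range $[b^-,b^+]=[0,b^*+1]$, and the follower's value distribution from the proof of Theorem~\ref{theorem_stoch2} (continuous case) resp.\ Theorem~\ref{theorem_stoch1} (finite case), under which each item $i\in\fot m$ precedes item $m+1$ in the follower's preference order independently with probability $\tfrac12$.

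Next I would analyse $\hat f$ on $[0,b^+]$. Let $I\subseteq\fot m$ be the (random) set of items the follower prefers to $m+1$ and put $s_I:=\sum_{i\in I}a_i^*$. The key observation is that, since items $1,\dots,m$ contribute $0$ to the leader and $b^+=b^*+1\le A^*\le s_I+A^*$, at any capacity $b\le b^+$ the follower is either still inside the block $I$ or has just begun item $m+1$, so the leader collects precisely $(b-s_I)^+$. Hence $\hat f(b)=\mathbb E_\cbf[(b-s_I)^+]-\delta b$ on $[0,b^+]$, which is convex in $b$ (an average of the convex functions $b\mapsto(b-s_I)^+$, less a linear term; the heavier weight on $I=\fot m$ in the finite case is immaterial because $s_{\fot m}=A^*\ge b$). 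A convex function on $[0,b^+]$ is maximised at an endpoint, and $\hat f(0)=0$, so the optimal value of~\eqref{stochasticprob} is $0$ exactly when $\hat f(b^*+1)\le 0$, i.e.\ when $\mathbb E_\cbf[(b^*+1-s_I)^+]\le\delta(b^*+1)$.

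Then I would extract the count. In the continuous case each $I$ has probability $2^{-m}$, and writing $b^*+1-s_I=\#\{t\in\{0,\dots,b^*\}:t\ge s_I\}$ gives
\[
  \mathbb E_\cbf\big[(b^*+1-s_I)^+\big]
  =\frac1{2^m}\sum_{t=0}^{b^*}\#\{x\in\{0,1\}^m:{a^*}^\top x\le t\}\;.
\]
Since $\delta(b^*+1)=2^{-m}(k+\tfrac12)$ and the sum is an integer, the optimal value is $0$ iff $N(b^*)\le k$, where $N(s):=\sum_{t=0}^{s}\#\{x\in\{0,1\}^m:{a^*}^\top x\le t\}$. (The finite case is identical with $2^{-m}$ replaced by $2^{-(m+1)}$ and $\delta$ rescaled, exactly the step from Theorem~\ref{theorem_stoch1} to Theorem~\ref{theorem_stoch2}.) An oracle deciding ``is the optimal value $0$?'' therefore pins down $N(b^*)$ by binary search over $k\in\{0,\dots,(b^*+1)2^m\}$ in polynomially many calls, and likewise $N(b^*-1)$, so $\#\{x:{a^*}^\top x\le b^*\}=N(b^*)-N(b^*-1)$ is computed in polynomial time --- making the decision problem \#P-hard, since \#Knapsack is.

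I expect the only delicate point to be the convexity step: one must be sure that on the entire interval $[0,b^+]$ the follower never finishes packing item $m+1$, which is what forces the choice $b^+\le A^*=\sum_i a_i^*$; this is precisely what keeps $\hat f$ convex on $[0,b^+]$ and collapses the condition ``optimal value $=0$'' to the single inequality at $b=b^*+1$. Everything else --- the probabilities of the sets $I$, the counting identity for $N$, and the binary search --- is routine and parallels the earlier proofs.
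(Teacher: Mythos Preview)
Your proof is correct and takes a genuinely different route from the paper's. The paper argues by a black-box reduction: given \emph{any} instance of~\eqref{stochasticprob} and any threshold~$K$, it adds a single high-priority item with leader value~$-K+\delta$ and size~$1$, so that the new objective is the old one shifted down by~$K$ (and prefixed by a segment~$\hat f_K(b)=-Kb$ on~$[0,1]$). Thus ``is the optimum of the new instance zero?'' decides ``is the optimum of the old instance at most~$K$?'', and bisection on~$K$ then recovers enough information to invoke Theorems~\ref{theorem_stoch1} and~\ref{theorem_stoch2} directly. You instead construct a tailored instance from scratch: by setting~$d_i=0$ for~$i\le m$ and~$d_{m+1}=A^*$, you make the realized objective~$(b-s_I)^+$ and hence~$\hat f$ \emph{convex} on~$[0,b^+]$ (whereas in the paper's earlier constructions it was concave), which collapses the question ``is the optimum zero?'' to a single pointwise inequality at~$b=b^+$; you then thread the threshold through~$\delta$ rather than through an extra item.

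The paper's approach is shorter and more modular (the ``shift item'' trick would apply to many other problems and reuses Theorems~\ref{theorem_stoch1}--\ref{theorem_stoch2} as black boxes), while yours is self-contained and arguably more transparent as a direct reduction from \#Knapsack, with the convexity twist being a nice structural observation. One small inaccuracy: in the finite case of Theorem~\ref{theorem_stoch1} the events~$\{i\in I_\cbf\}$ are \emph{not} independent with probability~$\tfrac12$ each (the set~$I_\cbf=[m]$ gets extra weight~$\tfrac12$); you correctly note later that this is immaterial since~$s_{[m]}=A^*>b^*$ kills that term, but the initial sentence describing the distribution should be amended accordingly.
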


\begin{proof}
  For any given instance of~\eqref{stochasticprob} and
  any~$K\in\mathbb{R}_{\ge 0}$, we can efficiently construct a new
  instance of~\eqref{stochasticprob} by adding an item~$n+1$ with
  leader's value~$d_{n+1}=-K+\delta$, size~$a_{n+1}=1$, and any
  distribution of~$\cbf_{n+1}$
  guaranteeing~$\mathbb{P}(\cbf_{n+1}>\cbf_{i}/a_i)=1$ for
  all~$i\in\fot n$. Let~$\hat f$ and~$\hat f_K$ denote the objective
  functions of the original and the extended problem, respectively.
  By construction, the follower will always choose item~$n+1$ first in
  the extended instance. Thus~$\hat f_K(b)=-Kb\le 0$ for~$b\in[0,1]$
  and~$\hat f_K(b)=\hat f(b-1)-K$ for~$b\in[1,b^++1]$. In summary, we
  can polynomially reduce the decision of
  \begin{equation}\label{eq_red_approx}
    \max_{b\in [0,b^+]}\hat f(b)\le K
  \end{equation}
  to the decision of
  $$\max_{b\in [0,b^++1]}\hat f_K(b)\le 0\;.$$ Since the computation
  of an optimal solution for the original instance
  of~\eqref{stochasticprob} can be polynomially reduced to problems of
  type~\eqref{eq_red_approx}, using a bisection algorithm, the desired
  result now follows from Theorems~\ref{theorem_stoch1}
  and~\ref{theorem_stoch2}.
\end{proof}

\subsection{Pseudo-Polynomial Time Algorithms}\label{section_pseudo_alg}

We now present pseudo-polynomial time algorithms for the stochastic
bilevel continuous knapsack problem in the \#P-hard cases addressed in
the previous section. These algorithms are based on a dynamic
programming approach. As discussed in Section~\ref{section_basics},
the leader's objective function is piecewise linear with vertices at
integral positions $b\in\fot A$, since the item sizes $a$ are assumed
to be integral. Therefore, we could solve the problem by evaluating
$\hat f(b)$ for~$\mathcal{O}(A)$ many values of~$b$, and~$A$ has
polynomial size in the numerical values of the sizes~$a$. The quest
for a pseudo-polynomial algorithm thus reduces to the computation
of~$\hat f(b)$ for given~$b\in \fot A$. However, we have seen in the
previous section that also the latter task is \#P-hard in general.

Actually, as mentioned, even the computation of $\Delta\hat{x}_i(b)$
for given $i \in\fot n$ and~$b \in \fot A$ can be \#P-hard. In the
following, we will present algorithms for computing $\Delta\hat{x}$ in
pseudo-polynomial time for the distributions addressed above, i.e.,
for item values that are componentwise uniformly distributed on a
finite set or on a closed interval. From $\Delta\hat{x}$ one can
compute $\hat{x}$ using formula~\eqref{eq:haty} and with that evaluate
$\hat f$ for all integral capacities using formula~\eqref{eq:f}, which
takes pseudo-polynomial time $\mathcal{O}(nA)$.

In order to compute~$\Delta\hat x$, we first define the auxiliary function
\[
g_i(b, I) := \mathbb{P}\left(\cbf_i > 0\text{ and }\sum_{j \in I,\,\cbf_j/a_j > \cbf_i/a_i} a_j = b
\right)\;,
\]
for $i \in \fot n$, $I \subseteq \fot n \setminus \{i\}$, and $b \in
\mathbb{Z}$, i.e., the probability that item~$i$ is profitable and
that the total size of all items in~$I$ that the follower prefers over
item $i$ is exactly~$b$. We then have
\begin{equation}
  \Delta\hat{x}_i(b) = \frac{1}{a_i} \sum_{r = 1}^{a_i} g_i\left(b-r, {\fot n}\setminus \{i\}\right)
  \label{eq:y_from_g}
\end{equation}
for all $i \in {\fot n}$ and $b\in \fot A$. Indeed,
for~$r\in\fot{a_i}$, the value~$g_i(b-r, {\fot n}\setminus \{i\})$
describes the probability that a percentage of exactly~$r/a_i$ of
item~$i$ is packed when the capacity is~$b$, and hence~$(r-1)/{a_i}$
when the capacity is~$b-1$. The sum thus represents the probability
that the percentage of item~$i$ being packed is increased by~$1/a_i$
when increasing the capacity from~$b-1$ to~$b$. Hence, the right hand
side agrees with~$\hat{x}_i(b)-\hat{x}_i(b-1)=\Delta\hat x_i(b)$.

Note that~$\cbf_j/a_j > \cbf_i/a_i$ and~$\cbf_i > 0$ in the definition
of~$g_i(b, I)$ could be replaced equivalently by~$\cbf_j/a_j \geq
\cbf_i/a_i$ and $\cbf_i \geq 0$ by Assumption~\eqref{eq:ass}, and the
same is true in the following whenever comparing profits of different
items.

Determining all probabilities $g_i(b, {\fot n}\setminus\{i\})$ would
thus allow us to compute~$\Delta\hat{x}$ and hence to solve
Problem~\eqref{stochasticprob} in pseudo-polynomial time. To this end,
for $i \in \fot n$, $I \subseteq {\fot n} \setminus \{i\}$, $b \in
\mathbb{Z}$, and $\crlz \in \mathbb{R}$, we next define
\begin{equation}\label{eq_def_h}
h_i(b, I, \crlz) := \mathbb{P}\left(
\sum_{j \in I,\,\cbf_j/a_j > \crlz/a_i} a_j = b
\right)\;,
\end{equation}
i.e., the probability that the total size of all items in $I$ with
profit larger than~$\frac{\crlz}{a_i}$ is exactly~$b$. For the following,
let $\text{supp}^+(\cbf_i):=\text{supp}(\cbf_i)\cap\mathbb{R}_{>0}$.

From now on we assume componentwise independent item values. Under
this assumption, for item values with finite support, we then have
\begin{align}
  g_i(b, I) = \sum_{\crlz \in \text{supp}^+(\cbf_i)}
  \mathbb{P}(\cbf_i = \crlz) \cdot h_i(b, I, \crlz)\;, \label{eq:g_finite}
\end{align}
while for absolutely continuously distributed item values,
we have
\begin{align}
  g_i(b, I) = \int_{\text{supp}^+(\cbf_i)}
  p_i(\crlz) \cdot h_i(b, I, \crlz) \; d \crlz\;, \label{eq:g_continuous}
\end{align}
where $p_i(\crlz)$ denotes the probability density function of the random
variable~$\cbf_i$. For all~$i\in\fot n$ and $\crlz\in\mathbb{R}$, it
is easy to verify that
\[
	h_i(b, \emptyset, \crlz) =
	\begin{cases}
	    1 & \text{for } b=0 \\
	    0 & \text{for } b\neq 0
	\end{cases}
\]
and, if~$j\in I$ and~$i\not\in I$,
\begin{align}
  h_i(b, I, \crlz) & = \mathbb{P}(\cbf_j / a_j > \crlz / a_i)
      \cdot h_i(b-a_j, I \setminus \{j\}, \crlz) \notag \\
    & \quad + (1 - \mathbb{P}(\cbf_j / a_j > \crlz / a_i))
      \cdot h_i(b, I \setminus \{j\}, \crlz)\;.
    \label{eq:dynamic_h}
\end{align}
This recursive formula enables an incremental computation of $h_i(b,
{\fot n}\setminus \{i\}, \crlz)$ for given $i \in {\fot n}$,~$\crlz
\in\mathbb{R}$, and~$b\in\{0,\dots,A\}$. We emphasize that only
$\mathcal{O}(n)$ many subsets~$I$ of~$[n]$ need to be considered for
this, as, for the recursion~\eqref{eq:dynamic_h}, it suffices to
choose one element~$j$ to be removed from~$I$. Using this, we can now
develop pseudo-polynomial time algorithms for computing
$\Delta\hat{x}$ in case of independently and uniformly distributed
item values, with supports being either finite sets or continuous
intervals.

\subsubsection{Componentwise uniform distributions with finite support}

In this section, we assume that all item values are independently and
discretely distributed with finite support. More specifically,
for~$i\in\fot n$, the value of item~$i$ has $m_i \in \mathbb{N}$
different realizations $c_i^1,\dots,c_i^{m_i}$ with probabilities
$p_i^1,\dots,p_i^{m_i}$, respectively. Let $m :=
\max\{m_1,\dots,m_n\}$ denote the maximum number of different values
any item can take.

\begin{lemma}\label{lemma_prob_finite}
  All probabilities $\mathbb{P}(\cbf_j/a_j > c_i^k / a_i)$ for $i, j
  \in {\fot n}$, $i \neq j$, and $k\in\fot{m_i}$ can be computed in
  time $\mathcal{O}(m^2n^2)$.
\end{lemma}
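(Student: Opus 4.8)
The quantity $\mathbb{P}(\cbf_j/a_j > c_i^k/a_i)$ for a fixed pair $(i,j)$ and fixed threshold $c_i^k/a_i$ is just the probability that the discrete random variable $\cbf_j/a_j$ exceeds a given value, i.e.\ a tail sum of the $m_j$ atoms of $\cbf_j$ weighted by their probabilities $p_j^1,\dots,p_j^{m_j}$. The obvious approach would be, for each of the $\mathcal{O}(n^2)$ ordered pairs $(i,j)$ and each of the $\mathcal{O}(m)$ thresholds $c_i^k$, to run through all $m_j$ realizations of $\cbf_j$ and add up the probabilities of those with $c_j^\ell/a_j > c_i^k/a_i$; this naively costs $\mathcal{O}(m^3 n^2)$. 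To get down to $\mathcal{O}(m^2n^2)$ as claimed, the plan is to avoid recomputing the tail sums from scratch for every threshold.

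First I would, for each $j\in\fot n$, sort the realizations $c_j^1,\dots,c_j^{m_j}$ in increasing order of the ratios $c_j^\ell/a_j$ and precompute the suffix sums of the corresponding probabilities; this is $\mathcal{O}(m\log m)$ per item, hence $\mathcal{O}(mn\log m)$ overall, which is dominated by the target bound. Then, for a fixed pair $(i,j)$, I would merge the sorted list of the $m_i$ thresholds $c_i^k/a_i$ with the sorted list of the $m_j$ ratios $c_j^\ell/a_j$ in a single linear pass: walking through both lists simultaneously, each threshold $c_i^k/a_i$ lands between two consecutive realizations of $\cbf_j/a_j$, and the required probability is simply the precomputed suffix sum starting at the first realization strictly above the threshold. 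This costs $\mathcal{O}(m_i+m_j)=\mathcal{O}(m)$ per pair after the sorting, giving $\mathcal{O}(m n^2)$ for all pairs, and together with the sorting phase the total is $\mathcal{O}(m^2n^2)$ — in fact one can afford the cruder bound since even sorting all $\mathcal{O}(n^2)$ merged lists separately would still be within $\mathcal{O}(m^2n^2)$.

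The only point requiring a little care is the handling of ties, i.e.\ thresholds $c_i^k/a_i$ that coincide with some realization $c_j^\ell/a_j$; by Assumption~\eqref{eq:ass} this is a probability-zero event, and as remarked after \eqref{eq:dynamic_h} the strict inequality $\cbf_j/a_j > c_i^k/a_i$ may be replaced by $\geq$, so either convention in the merge step yields the same value and the tie-breaking is immaterial. I do not expect a genuine obstacle here: the statement is essentially a bookkeeping lemma whose content is that tail probabilities of $n$ discrete distributions, each at $\mathcal{O}(n)$ thresholds, can be read off in amortized constant time after sorting. The phrase "main obstacle" is perhaps an overstatement for such a step; the one thing worth stating explicitly in the write-up is why the merge is correct, namely the monotonicity of the suffix-sum function in the threshold.
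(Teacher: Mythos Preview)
Your sorting-and-merging argument is correct and would work, but it is more elaborate than necessary because of a miscount in your analysis of the naive approach. You write that running through all $m_j$ realizations of $\cbf_j$ for each of the $\mathcal{O}(n^2)$ pairs and each of the $\mathcal{O}(m)$ thresholds costs $\mathcal{O}(m^3n^2)$; in fact this is $\mathcal{O}(n^2)\cdot\mathcal{O}(m)\cdot\mathcal{O}(m)=\mathcal{O}(m^2n^2)$, which is already the target bound. The paper's proof is precisely this naive approach: for each triple $(i,j,k)$ it computes
\[
\mathbb{P}(\cbf_j/a_j > c_i^k/a_i)=\sum_{\ell\in\fot{m_j},\,c_j^\ell/a_j>c_i^k/a_i} p_j^\ell
\]
directly in time $\mathcal{O}(m)$, and observes there are $\mathcal{O}(mn^2)$ such triples.

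Your route is not wrong, merely overkill: the suffix-sum-and-merge idea actually yields the strictly better bound $\mathcal{O}(mn\log m + mn^2)$, but this refinement is not needed for the lemma as stated, and the paper does not pursue it. The discussion of ties via Assumption~\eqref{eq:ass} is fine but also unnecessary once you realise the naive sum is all that is required.
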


\begin{proof}
  Each such probability can be computed in time $\mathcal{O}(m)$ as
  \[
    \mathbb{P}(\cbf_j/a_j > c_i^k / a_i)
      = \sum_{\ell\in \fot{m_j}, \, c_j^\ell/a_j > c_i^k / a_i}
        p_j^\ell\;,
  \]
  and the claim follows.
\end{proof}

\begin{lemma}\label{lemma_h_finite}
  Let $i \in \fot n$ and $\crlz\in\mathbb{R}$. Given the probabilities
  \mbox{$\mathbb{P}(\cbf_j/a_j > \crlz / a_i)$} for all~$j\in\fot
  n\setminus\{i\}$, the probabilities $h_i(b, {\fot n} \setminus
  \{i\}, \crlz)$ for all~$b\in\{0,\dots,A-a_i\}$ can be computed in
  time $\mathcal{O}(nA)$.
\end{lemma}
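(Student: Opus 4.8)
The plan is to prove Lemma~\ref{lemma_h_finite} by exploiting the recursion~\eqref{eq:dynamic_h} directly as a dynamic program over the items in $\fot n \setminus \{i\}$. First I would fix an arbitrary ordering of the items in $\fot n \setminus \{i\}$, say $j_1, \dots, j_{n-1}$, and for $t \in \{0, 1, \dots, n-1\}$ set $I_t := \{j_1, \dots, j_t\}$, so that $I_0 = \emptyset$ and $I_{n-1} = \fot n \setminus \{i\}$. The dynamic program maintains, for the current index $t$, the array of values $h_i(b, I_t, \crlz)$ for all $b \in \{0, \dots, A - a_i\}$; note that $\sum_{j \in I_t} a_j \le A - a_i$ always, since $i \notin I_t$, so the relevant range of $b$ never exceeds $A - a_i$, and values outside $\{0,\dots,A-a_i\}$ are zero and need not be stored.

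Next I would describe the initialization and the update step. The base case from the excerpt gives $h_i(b, \emptyset, \crlz) = 1$ for $b = 0$ and $0$ otherwise, so the initial array has a single nonzero entry. For the transition from $t-1$ to $t$, with $j := j_t$, formula~\eqref{eq:dynamic_h} gives
\[
  h_i(b, I_t, \crlz) = q_j \cdot h_i(b - a_j, I_{t-1}, \crlz) + (1 - q_j) \cdot h_i(b, I_{t-1}, \crlz),
\]
where $q_j := \mathbb{P}(\cbf_j / a_j > \crlz / a_i)$ is one of the probabilities assumed to be given. Computing the whole updated array from the previous one takes $\mathcal{O}(A)$ arithmetic operations (two multiplications, one addition, and an index shift per value of $b$), and there are $n - 1$ such update steps, for a total of $\mathcal{O}(nA)$ operations. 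A brief correctness argument is needed: one shows by induction on $t$ that the stored array equals $h_i(\cdot, I_t, \crlz)$, the base case being the displayed formula for $h_i(\cdot, \emptyset, \crlz)$ and the inductive step being exactly~\eqref{eq:dynamic_h}, which applies because $j_t \in I_t$ and $i \notin I_t$. After $n - 1$ steps we have precisely the desired values $h_i(b, \fot n \setminus \{i\}, \crlz)$.

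The argument is essentially routine; the only point requiring a little care is bookkeeping of the index range for $b$ in the shift $h_i(b - a_j, I_{t-1}, \crlz)$, namely that one must treat $h_i(b', \cdot, \crlz) = 0$ for $b' < 0$, and that the array length $A - a_i + 1$ is enough throughout because the total size of items ever inserted into $I$ is at most $A - a_i$. I do not expect a genuine obstacle here: the recursion~\eqref{eq:dynamic_h} is already set up for exactly this kind of incremental computation, and the bound $\mathcal{O}(nA)$ falls out immediately from counting $n-1$ passes of length $\mathcal{O}(A)$ each. If one wants to be careful about the model of computation, one should remark that the arithmetic is on rational numbers of polynomially bounded encoding length (products and sums of the given probabilities), so that each arithmetic operation is itself polynomial-time, and the $\mathcal{O}(nA)$ count is a count of such operations.
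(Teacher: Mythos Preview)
Your proposal is correct and follows essentially the same approach as the paper: both iterate over the items in $\fot n \setminus \{i\}$ one at a time, applying the recursion~\eqref{eq:dynamic_h} to update an array indexed by $b$, with initialization from $h_i(\cdot,\emptyset,\crlz)$ and $\mathcal{O}(n)$ passes of $\mathcal{O}(A)$ work each. The only cosmetic difference is that the paper tracks a running $b_{\max}$ rather than allocating the full array upfront, which does not affect correctness or the stated bound.
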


\begin{proof}
  We apply the recursive formula \eqref{eq:dynamic_h} in order to
  compute the desired probabilities~$h_i(b, {\fot n} \setminus \{i\},
  \crlz)$. More specifically, setting $I := \emptyset$ and $b_{max} :=
  0$ at the beginning, we iterate over all $j \in {\fot n} \setminus
  \{i\}$ in an arbitrary order.  For each such~$j$, we first compute
    \begin{align*}
      h_i(b, I \cup \{j\}, \crlz)
        & =  \mathbb{P}(\cbf_j / a_j > \crlz/a_i) \cdot h_i(b-a_j, I, \crlz)\\
      &\quad + (1 - \mathbb{P}(\cbf_j / a_j > \crlz/a_i))
        \cdot h_i(b, I, \crlz)
    \end{align*}
    for all $b\in\{0,\dots,b_{max}+a_j\}$, with $h_i(b, I, \crlz) = 0$ for
    $b \not\in \{0,\dots,b_{max}\}$, and
  then set $I := I \cup \{j\}$ and $b_{max} := b_{max} + a_j$.
  After the last iteration, we then have computed $h_i(b, {\fot n}
  \setminus \{i\}, \crlz)$ for all~$b\in\{0,\dots,A-a_i\}$. There are
  $\mathcal{O}(n)$ iterations and each iteration can be executed in
  time $\mathcal{O}(A)$.
\end{proof}

\begin{theorem}\label{theorem_complexity_finite}
  For item values that are independently distributed on finite sets,
  Problem~\eqref{stochasticprob} can be solved in time
  $\mathcal{O}(m^2n^2 + mn^2A)$.
\end{theorem}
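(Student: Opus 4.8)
The plan is to assemble the running-time bound by combining the three preceding lemmas with the formulas \eqref{eq:y_from_g} and \eqref{eq:g_finite}, and then the remark at the end of Section~\ref{section_pseudo_alg} that $\hat f$ can be evaluated at all integral capacities in $\mathcal{O}(nA)$ time once $\hat x'$ is known. First I would invoke Lemma~\ref{lemma_prob_finite} to precompute, in $\mathcal{O}(m^2n^2)$ time, all probabilities $\mathbb{P}(\cbf_j/a_j > c_i^k/a_i)$ for $i,j\in\fot n$, $i\neq j$, and $k\in\fot{m_i}$; these are exactly the inputs required by the subsequent steps, since in \eqref{eq:g_finite} the only relevant values of $\crlz$ are the finitely many $c_i^k\in\text{supp}^+(\cbf_i)$.

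Next I would bound the cost of computing $\hat x'$. Fix $i\in\fot n$. For each of the at most $m$ realizations $\crlz=c_i^k$ of $\cbf_i$ that are positive, Lemma~\ref{lemma_h_finite} computes all values $h_i(b,\fot n\setminus\{i\},\crlz)$ for $b\in\{0,\dots,A-a_i\}$ in $\mathcal{O}(nA)$ time, using the already-available probabilities. So over all $\crlz$ in $\text{supp}^+(\cbf_i)$ this is $\mathcal{O}(mnA)$. Then, by \eqref{eq:g_finite}, each $g_i(b,\fot n\setminus\{i\})$ is a weighted sum of $m_i\le m$ such values, so computing $g_i(b,\fot n\setminus\{i\})$ for all $b$ costs $\mathcal{O}(mA)$, which is dominated. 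Finally \eqref{eq:y_from_g} expresses $\hat x_i'(b)$ as a sum of $a_i$ shifted values $g_i(b-r,\fot n\setminus\{i\})$; computing all $\hat x_i'(b)$ for $b\in\fot A$ naively looks like $\mathcal{O}(a_iA)$, but using a running/prefix-sum over the sequence $g_i(\cdot,\fot n\setminus\{i\})$ this is just $\mathcal{O}(A)$, again dominated. Summing over $i\in\fot n$ gives $\mathcal{O}(mn^2A)$ for computing the full vector-valued function $\hat x'$.

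To finish, I would note that once $\hat x'$ is available one recovers $\hat x_i(b)=\sum_{r=1}^{b}\hat x_i'(r)$ via \eqref{eq:haty} and then $\hat f(b)=d^\top\hat x(b)-\delta b$ via \eqref{eq:f} for all $b\in\{0,\dots,A\}$ in $\mathcal{O}(nA)$ time, as already observed; maximizing the piecewise linear function $\hat f$ over $[b^-,b^+]$ then only requires checking the integral points in that interval together with $b^-$ and $b^+$, which is within the same bound. Adding the preprocessing cost $\mathcal{O}(m^2n^2)$ to the $\mathcal{O}(mn^2A)$ for the main computation yields the claimed $\mathcal{O}(m^2n^2 + mn^2A)$.

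I do not expect a genuine obstacle here — the theorem is essentially a bookkeeping assembly of Lemmas~\ref{lemma_prob_finite}, \ref{lemma_h_finite}, and the recursion \eqref{eq:dynamic_h}. The one point requiring a little care is the accounting for \eqref{eq:y_from_g}: a careless reading suggests an extra factor of $a_i$ (hence of $A$) per item, and one should point out that the shifted-sum structure lets this be done by an incremental/prefix-sum update so that it contributes only $\mathcal{O}(A)$ per item and does not worsen the bound. Everything else is a direct multiplication of the per-step costs by the number of steps ($n$ items, $m$ realizations each).
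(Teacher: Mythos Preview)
Your proposal is correct and follows essentially the same approach as the paper's proof: precompute the probabilities via Lemma~\ref{lemma_prob_finite}, compute the $h_i$-values via Lemma~\ref{lemma_h_finite}, assemble the $g_i$ via~\eqref{eq:g_finite}, and then obtain $\hat x'$ incrementally (the paper writes this out explicitly as the telescoping recursion $\hat x_i'(b)=\hat x_i'(b-1)+\tfrac{1}{a_i}(g_i(b,\cdot)-g_i(b-a_i,\cdot))$, which is exactly your prefix-sum observation). One small slip: there are only two preceding lemmas, not three.
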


\begin{proof}
  By Lemmas~\ref{lemma_prob_finite} and~\ref{lemma_h_finite}, all
  values $h_i(b, {\fot n} \setminus \{i\}, c_i^k)$ for $i \in {\fot
    n}$, $k\in\fot{m_i}$, and $b \in\{0,\dots,A\}$ can be computed in
  time $\mathcal{O}(m^2n^2 + mn^2A)$. Next, all probabilities $g_i(b,
  {\fot n} \setminus \{i\})$ can be computed in time
  $\mathcal{O}(mnA)$ according to~\eqref{eq:g_finite}.  Finally, all
  values~$\Delta\hat{x}_i(b)$ for $i \in {\fot n}$ and $b\in\fot A$
  can be computed in time $\mathcal{O}(nA)$ using $\Delta\hat x_{i}(1)
  = \tfrac{1}{a_i} \; g_i\left(0, {\fot n}\setminus \{i\}\right)$ and
  \[
    \Delta\hat x_{i}(b) =
      \Delta\hat x_{i}(b-1) + \tfrac{1}{a_i} \left(
        g_i\left(b, {\fot n}\setminus \{i\}\right) -
        g_i\left(b-a_i, {\fot n}\setminus \{i\}\right)
      \right)
  \]
  for $b\in\{2,\dots,A\}$, which follows from \eqref{eq:y_from_g}. From
  $\Delta\hat{x}$ one can compute $\hat{x}$ in time~$\mathcal{O}(nA)$
  using~\eqref{eq:haty} and with that evaluate $\hat f(b)$ for
  all~$b\in\fot A$ in time~$\mathcal{O}(nA)$ using~\eqref{eq:f}.
\end{proof}

In case the number of possible realizations for each component
of~$\cbf$ is bounded by a constant, e.g., when we have $m=2$ as in the
proof of Theorem~\ref{theorem_stoch1}, the running time stated in
Theorem~\ref{theorem_complexity_finite} simplifies
to~$\mathcal{O}(n^2A)$.

\subsubsection{Componentwise continuous uniform distributions}

In this section, we assume that the value~$\cbf_i$ of item~$i\in\fot
n$ is distributed uniformly on the continuous interval $[c_i^-,
c_i^+]$ with $c_i^- < c_i^+$. The key difference to the discrete case
discussed in the previous section is that we compute the
probabilities~$h_i(b, I, \crlz)$ not for fixed~$\crlz$, but as
functions in~$\crlz$. The involved probabilities $\mathbb{P}(\cbf_j /
a_j > \crlz/a_i)$ are piecewise linear in $\crlz$.  As a result,
$h_i(b, I, \crlz)$ is a piecewise polynomial function in $\crlz$ and
the expected values $\Delta\hat{x}$ can be computed as integrals over
piecewise polynomial functions.

Consider $V := \{\max\{0,\frac{c_i^-}{a_i}\},
\max\{0,\frac{c_i^+}{a_i}\} \mid i \in {\fot n} \}$ and set
$r:=|V|$. Let~$v_1,\dots,v_r \in V$ with $v_1 < \dots < v_r$ be the
ascending enumeration of all elements in $V$. For~$i, j \in \fot n$,
$i \neq j$, and~$\crlz\in \mathbb{R}$, we have
\[
\mathbb{P}(\cbf_j / a_j > \crlz / a_i) =
\begin{cases}
    1 & \text{for } \crlz \leq \tfrac{a_i}{a_j}c_j^- \\
    \frac{c_j^+-\tfrac{a_j}{a_i}\crlz}{c_j^+ - c_j^-}
    & \text{for } \tfrac{a_i}{a_j}c_j^- \leq \crlz \leq \tfrac{a_i}{a_j}c_j^+ \\
    0 & \text{for } \crlz \geq \tfrac{a_i}{a_j}c_j^+\;.
\end{cases}
\]
In particular, $\mathbb{P}(\cbf_j / a_j > \crlz/a_i)$, as a function
in $\crlz$, is linear on each interval~$[a_i v_k, a_i v_{k+1}]$.

\begin{lemma}\label{lemma_h_cont}
  Let $i \in {\fot n}$ and $k \in \fot{r-1}$ be given. Then the
  coefficients of the polynomials~$h_i(b, {\fot n} \setminus \{i\},
  \crlz)$ on the interval $(a_i v_k, a_i
  v_{k+1}]\subseteq\text{supp}^+(\cbf_i)$, for all~$b \in \{0, \dots,
  A - a_i\}$, can be computed in time $\mathcal{O}(n^2A)$.
\end{lemma}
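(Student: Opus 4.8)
The plan is to mimic the dynamic-programming argument of Lemma \ref{lemma_h_finite}, but now carrying, for each fixed index $i$ and each fixed subinterval $(a_i v_k, a_i v_{k+1}]$, the functions $h_i(b, I, \crlz)$ as polynomials in $\crlz$ rather than as numbers. The key structural fact, already noted just before the lemma, is that on the interval $(a_i v_k, a_i v_{k+1}]$ each probability $\mathbb{P}(\cbf_j/a_j > \crlz/a_i)$ is an \emph{affine} function of $\crlz$ (either constantly $0$, constantly $1$, or the linear piece $\tfrac{c_j^+ - (a_j/a_i)\crlz}{c_j^+ - c_j^-}$); its two affine coefficients can be read off in constant time once $i$, $j$ and $k$ are known. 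I would first compute and store these coefficients for all $j \in \fot n \setminus \{i\}$, which costs $\mathcal{O}(n)$ for the given $i$ and $k$.

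Next I would run the incremental construction over $j$. Start with $I = \emptyset$, where $h_i(b, \emptyset, \crlz)$ is $1$ for $b = 0$ and $0$ otherwise — a degree-$0$ polynomial. At step $j$, apply the recursion \eqref{eq:dynamic_h}: since $\mathbb{P}(\cbf_j/a_j > \crlz/a_i)$ is affine in $\crlz$ on the whole interval $(a_i v_k, a_i v_{k+1}]$, multiplying it into the previously stored polynomials $h_i(\cdot, I, \crlz)$ and $h_i(\cdot - a_j, I, \crlz)$ and adding raises the degree by at most one per step. Hence after processing all of $\fot n \setminus \{i\}$, each $h_i(b, \fot n \setminus \{i\}, \crlz)$ is a polynomial of degree $\mathcal{O}(n)$ in $\crlz$, valid on the single interval $(a_i v_k, a_i v_{k+1}]$. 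As in Lemma \ref{lemma_h_finite}, only $b \in \{0, \dots, b_{\max}\}$ with $b_{\max} \le A - a_i$ ever carry a nonzero polynomial, so there are $\mathcal{O}(A)$ relevant values of $b$; combining a bounded number of length-$\mathcal{O}(n)$ coefficient vectors for each such $b$ costs $\mathcal{O}(n)$, giving $\mathcal{O}(nA)$ per iteration $j$ and $\mathcal{O}(n^2 A)$ in total, matching the claimed bound.

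The one point that needs a little care — and the closest thing to an obstacle — is bookkeeping the polynomial degree and making sure it does not blow up beyond $\mathcal{O}(n)$: each of the at most $n-1$ recursion steps contributes at most one extra degree because the mixing coefficient is affine on the \emph{whole} interval $(a_i v_k, a_i v_{k+1}]$ (this is exactly why we subdivided at the breakpoints $a_i v_k$ in the first place), so a simple induction on the number of processed items $j$ shows $\deg h_i(b, I, \crlz) \le |I|$. With that invariant in hand, the storage is $\mathcal{O}(n)$ coefficients per pair $(b, j)$ and the running-time accounting above goes through verbatim from the proof of Lemma \ref{lemma_h_finite}, with an extra factor $\mathcal{O}(n)$ for polynomial arithmetic instead of scalar arithmetic.
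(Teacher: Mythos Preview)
Your proposal is correct and follows essentially the same approach as the paper: run the dynamic program of Lemma~\ref{lemma_h_finite} while carrying polynomials in~$\crlz$, using that on the fixed interval $(a_i v_k, a_i v_{k+1}]$ each probability $\mathbb{P}(\cbf_j/a_j>\crlz/a_i)$ is affine, so each recursion step multiplies by a linear function and costs $\mathcal{O}(n)$ per value of~$b$. Your explicit degree-bound invariant $\deg h_i(b,I,\crlz)\le |I|$ is a welcome clarification of what the paper states only implicitly.
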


\begin{proof}
  Similarly to the proof of Lemma~\ref{lemma_h_finite}, we can compute
  $h_i(b, {\fot n} \setminus \{i\}, \crlz)$ by iteratively applying
  the recursive formula \eqref{eq:dynamic_h}. The only difference is
  that each application of the recursive formula involves two
  multiplications of a polynomial $h_i(b, I, \crlz)$ of degree
  $\mathcal{O}(n)$ with a linear function and the summation of the
  resulting polynomials. This can be done in time $\mathcal{O}(n)$ and
  the claim follows.
\end{proof}

\begin{theorem}\label{theorem_cont_pp}
  For item values that are independently and uniformly distributed on
  continuous intervals, Problem~\eqref{stochasticprob} can be solved
  in time~$\mathcal{O}(n^4A)$.
\end{theorem}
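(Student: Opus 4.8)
The plan is to mirror the pipeline from the proof of Theorem~\ref{theorem_complexity_finite}, with the finite sum~\eqref{eq:g_finite} replaced by the integral~\eqref{eq:g_continuous}. First I would compute, for each item~$i$, the probabilities~$h_i(b,\fot n\setminus\{i\},\crlz)$ \emph{as functions of~$\crlz$} — piecewise polynomials whose breakpoints lie among the numbers~$a_iv_k$ — then integrate each of them against the constant density~$p_i(\crlz)=\tfrac1{c_i^+-c_i^-}$ over~$\text{supp}^+(\cbf_i)$ to obtain all values~$g_i(b,\fot n\setminus\{i\})$, and finally recover~$\hat x'$ from the~$g_i$ via~\eqref{eq:y_from_g}, then~$\hat x$ via~\eqref{eq:haty}, then~$\hat f(b)$ for all integral~$b$ via~\eqref{eq:f}, and return the maximum of~$\hat f$ over~$b^-$, $b^+$ and the integers in~$[b^-,b^+]$.

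In detail I would proceed in four steps. First, compute and sort the breakpoint set~$V$, giving~$v_1<\dots<v_r$ with~$r=\mathcal O(n)$ in~$\mathcal O(n\log n)$ time. Second, for every~$i\in\fot n$ and every index~$k$ with~$(a_iv_k,a_iv_{k+1}]\subseteq\text{supp}^+(\cbf_i)$, apply Lemma~\ref{lemma_h_cont} to obtain the polynomial coefficients of~$h_i(b,\fot n\setminus\{i\},\crlz)$ on that interval for all~$b\in\{0,\dots,A-a_i\}$; there are~$\mathcal O(n^2)$ such pairs~$(i,k)$ and each application costs~$\mathcal O(n^2A)$, so this step takes~$\mathcal O(n^4A)$. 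Third, for each~$i\in\fot n$ and each~$b$, compute
\[
g_i(b,\fot n\setminus\{i\}) \;=\; \frac{1}{c_i^+-c_i^-}\int_{\text{supp}^+(\cbf_i)} h_i\bigl(b,\fot n\setminus\{i\},\crlz\bigr)\,d\crlz
\]
by splitting the integral at the breakpoints~$a_iv_k$ lying in~$\text{supp}^+(\cbf_i)$ and integrating, on each of the~$\mathcal O(n)$ resulting pieces, the explicitly known polynomial of degree~$\mathcal O(n)$ in~$\crlz$; this costs~$\mathcal O(n)$ per piece and hence~$\mathcal O(n^2)$ per pair~$(i,b)$, i.e.~$\mathcal O(n^3A)$ in total. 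Fourth, derive all~$\hat x_i'(b)$ from the~$g_i$ using~\eqref{eq:y_from_g} (e.g.\ with the incremental recurrence from the proof of Theorem~\ref{theorem_complexity_finite}) in~$\mathcal O(nA)$, then~$\hat x$ via~\eqref{eq:haty} and~$\hat f$ via~\eqref{eq:f}, again~$\mathcal O(nA)$ each, and finally maximize. Summing up, the running time is dominated by the second step and equals~$\mathcal O(n^4A)$, as claimed.

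The steps that need care — rather than being genuinely hard — are all about the piecewise structure. One should note that~$\max\{0,c_i^-/a_i\}$ and~$\max\{0,c_i^+/a_i\}$ both lie in~$V$, so the endpoints of~$\text{supp}^+(\cbf_i)$ are themselves breakpoints~$a_iv_k$; this is what makes the integral in the third step split cleanly into pieces on which~$h_i(b,\cdot,\crlz)$ is a single polynomial. One also uses that this polynomial has degree at most~$n-1$, since each of the~$n-1$ applications of the recursion~\eqref{eq:dynamic_h} multiplies the current polynomial by a function that is linear in~$\crlz$ on the interval under consideration. Finally, items~$i$ with~$c_i^+\le 0$ are never profitable, so~$g_i\equiv 0$ for them and they can simply be skipped. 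Beyond this bookkeeping, the only real work is encapsulated in Lemma~\ref{lemma_h_cont}, so I do not expect a serious obstacle.
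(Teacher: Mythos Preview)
Your proposal is correct and follows essentially the same approach as the paper's proof: invoke Lemma~\ref{lemma_h_cont} over all~$i$ and all at most~$2n$ relevant intervals to get the piecewise polynomial~$h_i$ in time~$\mathcal O(n^4A)$, then integrate against the uniform density via~\eqref{eq:g_continuous} to obtain all~$g_i(b,\fot n\setminus\{i\})$ in~$\mathcal O(n^3A)$, and finish as in Theorem~\ref{theorem_complexity_finite}. Your additional bookkeeping remarks (endpoints of~$\text{supp}^+(\cbf_i)$ being breakpoints, degree bound~$n-1$, items with~$c_i^+\le 0$) are correct elaborations that the paper leaves implicit.
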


\begin{proof}
  By Lemma~\ref{lemma_h_cont}, the piecewise polynomial functions
  $h_i(b, {\fot n}\setminus \{i\}, \crlz)$ can be computed for all $i
  \in {\fot n}$, all $b\in\{0,\dots,A-a_i\}$, and all of the at
  most~$2n$ intervals $(a_i v_k, a_i v_{k+1}] \subseteq
  \text{supp}^+(\cbf_i)$ in time $\mathcal{O}(n^4A)$. Using
  \eqref{eq:g_continuous}, all probabilities~$g_i(b, {\fot n}\setminus
  \{i\})$ can be obtained in time $\mathcal{O}(n^3A)$ by computing
  $\mathcal{O}(n)$ integrals over polynomials of degree
  $\mathcal{O}(n)$ for each $i \in \fot n$ and $b \in \fot A$. As in
  Theorem~\ref{theorem_complexity_finite}, the claim follows.
\end{proof}

From the proofs of Lemma~\ref{lemma_h_cont} and
Theorem~\ref{theorem_cont_pp}, it follows easily that a
pseudo-polynomial algorithm exists for each class of continuous
distributions such that the entries~$\cbf_i$ are independently
distributed and such that each~$\cbf_i$ has a piecewise polynomial
density function, assuming that the latter is given explicitly as part
of the input. In particular, each entry~$\cbf_i$ may have a support
consisting of a finite union of bounded closed intervals such
that~$\cbf_i$ induces a uniform distribution on each of these
intervals.

Moreover, the approach can deal with item values~$\cbf_i$ that are
given as weighted sums of independently and uniformly distributed
random variables on continuous intervals, as long as the number of
summands is fixed. Indeed, if some given distributions have piecewise
polynomial density functions, then the density function of their sum
(assuming independence) is again piecewise polynomial, and the number
of polynomial pieces is bounded by the product of the numbers of
pieces of the original density functions. In particular, this applies
to the case where each component independently follows an Irwin-Hall
distribution.

\subsection{Correlated Distributions}

The fact that \eqref{stochasticprob} can be solved in
pseudo-polynomial time for uniform independently distributed item
values raises the question of how much the complexity increases when
considering \emph{correlated} item values. This question is only
well-defined when we restrict ourselves to specific classes of
distributions and when we specify how exactly the input is given. As
an example, one may consider uniform distributions on general
polytopes, instead of boxes as in Theorem~\ref{theorem_stoch2} and
Theorem~\ref{theorem_cont_pp}. For this case, one can show that no
pseudo-polynomial algorithm in the total item size~$A$ can exist
unless P\;$=$\;NP, since the problem is already hard for unit
sizes. However, the hardness derives from the complexity that can be
modeled into the polytope, rather than from the stochastic
optimization task itself, and the problem may turn tractable again if
the numbers appearing in the description of the polytope are
polynomially bounded; see~\cite{irmai} for details.

\section{Additive Approximation Scheme for
  Componentwise Distributions}\label{approximation_scheme}

Building on the results of the previous section, we next devise a
fully pseudo-polynomial time additive approximation scheme for
Problem~\eqref{stochasticprob} for arbitrary absolutelycontinuous
distributions with independent components~$\cbf_i$. This approach can
be easily adapted to deal with discrete distributions (with finite or
infinite support) as well. Recall that the case of finite support was
settled by Theorem~\ref{theorem_complexity_finite}, but the running
time of the corresponding algorithm depends on the size of the
supports. It may thus be desirable to approximate a discrete
distribution by another discrete distribution with a smaller support
in order to obtain a faster running time, even if this comes with a
small error.

For each~$i\in\fot n$, we assume that the distribution of~$\cbf_i$ is
given by an oracle for its cumulative distribution function, defined
as
$$F_{\cbf_i}(\crlz):=\mathbb{P}(\cbf_i\le \crlz)\;,$$
as well as an oracle for its quantile function
$$Q_{\cbf_i}(p):=\inf\,\{\crlz\in\mathbb{R}\mid p\le F_{\cbf_i}(\crlz)\}\;.$$
For convenience, we set~$F_{\cbf_i}(\infty)=1$. If~$F_{\cbf_i}$
is invertible, we have~$Q_{\cbf_i}=F^{-1}_{\cbf_i}$.

Starting with some desired additive accuracy~$\varepsilon>0$, we set
$$m := \left\lceil \tfrac 1\varepsilon {(n-1) A D} \right\rceil + 1$$
with~$D:=\sum_{j=1}^n|d_j|$. The idea is to approximate each~$\cbf_i$
by a new random variable~$\tilde \cbf_i$ having a uniform distribution
on the finite set~$\{\tilde c_i^1,\dots,\tilde c_i^m\}$, where
$$\tilde c_i^k:=Q_{\cbf_i}\left(\tfrac {k-\nicefrac 12}{m}\right)$$
for~$k\in\fot m$; see Figure~\ref{fig:approx} for an illustration.
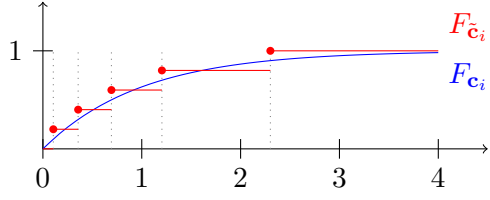
\begin{figure}
  \centering
  \begin{tikzpicture}[scale=1.3]
    \draw[->] (0,0) -- (4.5,0);
    \draw[->] (0,0) -- (0,1.5);

    \foreach \x in {0,...,4} {
      \draw (\x,0.1) -- (\x,-0.1) node[below] {$\x$};
    }
    \draw (0.1,1) -- (-0.1,1) node[left] {$1$};

    \draw[domain=0:4, smooth, variable=\x, blue] plot ({\x}, {1-exp(-\x)}) node[below right] {$F_{\cbf_i}$};

    \foreach \x in {1,...,5} {
      \draw[gray, dotted] ({-ln(1-(\x-0.5)/5))},0) -- ({-ln(1-(\x-0.5)/5))},1);
    }
    \draw[red] (0,0) -- ({-ln(1-(0.5)/5))},0);
    \foreach \x in {1,...,4} {
      \draw[red] ({-ln(1-(\x-0.5)/5))},{\x/5}) -- ({-ln(1-(\x+0.5)/5))},{\x/5});
    }
    \foreach \x in {1,...,5} {
      \node[circle,color=red,fill=red,inner sep=0pt,minimum size=3pt] at ({-ln(1-(\x-0.5)/5))},{\x/5}) {};
    }
    \draw[red] ({-ln(1-(5-0.5)/5))},1) -- (4,1) node[above right] {$F_{\tilde\cbf_i}$};

  \end{tikzpicture}
  \caption{Approximation of~$\cbf_i\sim\text{Exp}(1)$ by~$\tilde\cbf_i$, for $m=5$.}
  \label{fig:approx}
\end{figure}
Let~$\tilde h_i(b,I,\crlz)$ be defined as in~\eqref{eq_def_h}, but
for~$\tilde \cbf_i$ instead of~$\cbf_i$. Then, for each $i \in \fot n$
and $b \in \{0, \dots, A-a_i\}$, the probabilities~$\tilde
h_i(b,I,\crlz)$ form a piecewise constant function in~$\crlz$ with all
discontinuities belonging to the set
$$J_i:=\left\{\tfrac{a_i}{a_j}\tilde c_j^k\mid j\in\fot
  n\setminus\{i\},~k\in\fot m\right\}\;.$$ By
Lemma~\ref{lemma_h_finite}, for each fixed~$i \in \fot n$
and~$\crlz\in\mathbb{R}$, the values~$\tilde h_i(b,\fot
n\setminus\{i\},\crlz)$ for all $b \in \{0, \dots, A-a_i\}$ can be
computed in time~$\mathcal{O}(nA)$, given the
probabilities~$\mathbb{P}(\tilde\cbf_j/a_j> \crlz/a_i)$ for
all~$j\in\fot n \setminus \{i\}$. Hence, the computation for all
points~$\crlz \in J_i$ can be done in time~$\mathcal{O}(mn^2A)$. Each
of the probabilities~$\mathbb{P}(\tilde\cbf_j/a_j> \crlz/a_i)$ can be
computed in constant time using the oracle for~$F_{\cbf_j}$, since
$F_{\tilde\cbf_j}(\crlz\tfrac{a_j}{a_i})=1-\mathbb{P}(\tilde\cbf_j/a_j
> \crlz/a_i)$ can be obtained from~$F_{\cbf_j}(\crlz\tfrac{a_j}{a_i})$
by rounding it to the closest multiple of~$\nicefrac 1m$, rounding up
in case of a tie; compare Lemma~\ref{lemma_approx0}.
In summary, the computation of~$\tilde h_i(b,\fot
n\setminus\{i\},\crlz)$ for all~$i\in\fot n$, all $b \in \{0, \dots,
A\}$, and all~$\crlz\in J_i$ takes time~$\mathcal{O}(mn^3A)$.

Now let~$j_1,\dots,j_r$ be the elements of~$J_i\cap\mathbb{R}_{>0}$ in
ascending order, and set~$j_0=0$ and~$j_{r+1}=\infty$. Denoting the
probability density function of~$\cbf_i$ by~$p_i$ again, we set
\begin{eqnarray*}
  \tilde g_i(b, I) := \int_{0}^{\infty} p_i(\crlz) \cdot \tilde h_i(b,I, \crlz) \; d \crlz
  & = & \sum_{k=0}^r \int_{j_k}^{j_{k+1}} p_i(\crlz) \cdot \tilde h_i(b,I,j_k) \;d \crlz\\
  & = & \sum_{k=0}^r \tilde h_i(b,I,j_k) (F_{\cbf_i}(j_{k+1})-F_{\cbf_i}(j_{k}))\;.
\end{eqnarray*}
Using the oracle for evaluating~$F_{\cbf_i}$, we can thus
compute~$\tilde g_i(b,\fot n\setminus\{i\})$ for all~$i\in\fot n$ and
all~$b \in \{0, \dots, A\}$ from the relevant values of~$\tilde h_i$
in time~$\mathcal{O}(mn^2A)$. The total time needed to compute the
required values~$\tilde g_i(b,\fot n\setminus\{i\})$ is
thus given as $\mathcal{O}(mn^3A)$. Proceeding exactly as in the proof of
Theorem~\ref{theorem_complexity_finite}, based on the values~$\tilde
g_i(b,\fot n\setminus\{i\})$ instead of~$g_i(b,\fot n\setminus\{i\})$,
we can finally compute some~$\tilde b\in[b^-,b^+]$ which maximizes the
resulting objective function. The entire algorithm runs in
pseudo-polynomial time
$$\mathcal{O}(mn^3A) = \mathcal{O}(\tfrac 1\varepsilon n^4 A^2 D)\;.$$
We claim that the computed solution is actually an
$\varepsilon$-approximate solution for the original problem, in the
additive sense. To show this, we first observe that the cumulative
distribution functions of~$\cbf_i$ and~$\tilde\cbf_i$ have a pointwise
difference of at most~$\tfrac 1{2m}$.

\begin{lemma}\label{lemma_approx0}
  Let~$i\in\fot n$. Then
  $||F_{\cbf_i}-F_{\tilde\cbf_i}||_{\infty}\le \tfrac 1{2m}$.
\end{lemma}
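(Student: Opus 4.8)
The plan is to express $F_{\tilde\cbf_i}$ explicitly in terms of $F_{\cbf_i}$ by means of the standard Galois-type connection between a cumulative distribution function and its quantile function, and then to observe that $F_{\tilde\cbf_i}$ is simply $F_{\cbf_i}$ with its values rounded to the nearest multiple of~$\nicefrac1m$, which immediately yields the bound.

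First I would record the elementary fact that, for the non-decreasing right-continuous function $F_{\cbf_i}$, one has $Q_{\cbf_i}(p)\le\crlz$ if and only if $p\le F_{\cbf_i}(\crlz)$, for every $p\in(0,1]$ and $\crlz\in\mathbb{R}$; the nontrivial implication uses right-continuity of $F_{\cbf_i}$ to ensure $F_{\cbf_i}(Q_{\cbf_i}(p))\ge p$, so that the set $\{\crlz\mid p\le F_{\cbf_i}(\crlz)\}$ is of the form $[Q_{\cbf_i}(p),\infty)$. Since $\tilde\cbf_i$ is uniform on $\{\tilde c_i^1,\dots,\tilde c_i^m\}$, we have $F_{\tilde\cbf_i}(\crlz)=\tfrac1m\#\{k\in\fot m\mid \tilde c_i^k\le\crlz\}$, and applying the equivalence with $p=\tfrac{k-\nicefrac12}{m}$ shows that $\tilde c_i^k\le\crlz$ holds exactly when $k-\tfrac12\le m\,F_{\cbf_i}(\crlz)$, i.e.\ when $k\le m\,F_{\cbf_i}(\crlz)+\tfrac12$.

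Writing $t:=F_{\cbf_i}(\crlz)\in[0,1]$, the count above equals $\#\{k\in\fot m\mid k\le mt+\tfrac12\}=\lfloor mt+\tfrac12\rfloor$, where the last equality uses $0\le mt+\tfrac12\le m+\tfrac12$ so that the floor automatically lies in $\{0,\dots,m\}$ and no truncation at~$m$ is needed. Hence $F_{\tilde\cbf_i}(\crlz)=\tfrac1m\lfloor mt+\tfrac12\rfloor$, which is precisely $t$ rounded to the nearest multiple of~$\nicefrac1m$, rounding up in case of a tie (matching the convention described just before the lemma). Therefore $|F_{\cbf_i}(\crlz)-F_{\tilde\cbf_i}(\crlz)|=\tfrac1m\,|mt-\lfloor mt+\tfrac12\rfloor|\le\tfrac1{2m}$, and taking the supremum over $\crlz$ gives the claim.

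The only delicate point is the equivalence $Q_{\cbf_i}(p)\le\crlz\iff p\le F_{\cbf_i}(\crlz)$, which must be formulated with the correct one-sided conventions — it relies on right-continuity of $F_{\cbf_i}$ and on $Q_{\cbf_i}$ being the infimum as defined in the text — but once this is established the remainder is a one-line rounding estimate. All edge cases ($t=0$, $t=1$, and atoms of $\cbf_i$ making $F_{\cbf_i}$ genuinely discontinuous) are absorbed automatically, since the argument uses only $t\in[0,1]$ and monotonicity and never continuity of $F_{\cbf_i}$.
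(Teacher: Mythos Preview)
Your proof is correct and rests on the same key equivalence as the paper's argument, namely that $\tilde c_i^k\le\crlz$ iff $\tfrac{k-\nicefrac12}{m}\le F_{\cbf_i}(\crlz)$, which both proofs derive from the definition of the quantile function together with right-continuity of~$F_{\cbf_i}$. The only difference is presentational: the paper carries out a case analysis over the intervals $[\tilde c_i^k,\tilde c_i^{k+1})$ to bound $F_{\cbf_i}$ there, whereas you package the same equivalence into the closed-form identity $F_{\tilde\cbf_i}(\crlz)=\tfrac1m\lfloor mF_{\cbf_i}(\crlz)+\tfrac12\rfloor$ (thereby also making explicit the rounding rule mentioned just before the lemma), from which the $\tfrac1{2m}$ bound is immediate.
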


\begin{proof}
  Let~$k\in\fot m$. For any~$\beta\ge\tilde c_i^k$, we derive from the
  definition of~$Q_{\cbf_i}$ that there exists~$\gamma\le\beta$ such
  that~$\tfrac {k-\nicefrac 12}{m}\le F_{\cbf_i}(\gamma)$, and
  hence~$\tfrac {k-\nicefrac 12}{m}\le F_{\cbf_i}(\beta)$ by
  monotonicity of~$F_{\cbf_i}$.  For any~$\beta<\tilde c_i^k$, the
  definition of~$Q_{\cbf_i}$ directly implies that~$\tfrac
  {k-\nicefrac 12}{m}> F_{\cbf_i}(\beta)$. In particular, for~$k \in
  \fot{m - 1}$ and $\beta\in[\tilde c_i^k,\tilde c_i^{k+1})$, we
  obtain~$F_{\tilde\cbf_i}(\beta)=\tfrac km$
  and~$F_{\cbf_i}(\beta)\in[\tfrac {k-\nicefrac 12}{m},\tfrac
  {k+\nicefrac 12}{m})$, and
  hence~$|F_{\cbf_i}(\beta)-F_{\tilde\cbf_i}(\beta)|\le \tfrac 1{2m}$.
  For~$\beta\in[0,\tilde c_i^1)$, we have~$F_{\tilde\cbf_i}(\beta)=0$
  and~$F_{\cbf_i}(\beta)\le \tfrac 1{2m}$, while~$\beta\in [\tilde
  c_i^m,\infty)$ implies~$F_{\tilde\cbf_i}(\beta)=1$
  and~$F_{\cbf_i}(\beta)\ge \tfrac {m-\nicefrac 12}{m}=1-\tfrac
  1{2m}$.
\end{proof}

\begin{lemma}\label{lemma_approx}
  Let~$b^*$ be an optimizer of~\eqref{stochasticprob} for the original
  distribution of~$\cbf$ and let~$\tilde b$ be computed as described
  above. Then~$|\hat f(\tilde b)-\hat f(b^*)|\le \varepsilon$.
\end{lemma}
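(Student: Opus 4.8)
The key is to control the error introduced by replacing each $\cbf_i$ with $\tilde\cbf_i$. Let $\hat f$ denote the leader's objective function for the original distribution and $\tilde f$ the one for the approximated distribution $\tilde\cbf$. Since $\tilde b$ maximizes $\tilde f$ over $[b^-,b^+]$ while $b^*$ maximizes $\hat f$, a standard sandwiching argument gives
$$\hat f(\tilde b) \ge \tilde f(\tilde b) - \|\hat f - \tilde f\|_\infty \ge \tilde f(b^*) - \|\hat f - \tilde f\|_\infty \ge \hat f(b^*) - 2\|\hat f - \tilde f\|_\infty\;,$$
and also $\hat f(\tilde b)\le \hat f(b^*)$ by optimality of $b^*$. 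Hence it suffices to show $\|\hat f - \tilde f\|_\infty \le \varepsilon/2$. By the definition $\hat f(b)=d^\top\hat x(b)-\delta b$ and $\tilde f(b)=d^\top\tilde x(b)-\delta b$, the $\delta b$ terms cancel, so $|\hat f(b)-\tilde f(b)|\le \sum_{i=1}^n |d_i|\,|\hat x_i(b)-\tilde x_i(b)|$, and it is enough to bound $|\hat x_i(b)-\tilde x_i(b)|$ for each $i$ and each $b\in\{0,\dots,A\}$.

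**Bounding $|\hat x_i(b)-\tilde x_i(b)|$.** Recall from~\eqref{eq:hatx} that $\hat x_i(b)=\mathbb{E}_\cbf(x_i^\cbf(b))$, and that $x_i^\cbf(b)\in[0,1]$ depends on $\cbf$ only through the induced ordering of the items by the ratios $\cbf_j/a_j$. The plan is to couple $\cbf$ and $\tilde\cbf$: since $\tilde c_i^k = Q_{\cbf_i}(\tfrac{k-\nicefrac12}{m})$, the variable $\tilde\cbf_i$ is obtained from $\cbf_i$ by rounding $F_{\cbf_i}(\cbf_i)$ to the nearest value in $\{\tfrac{1/2}{m},\dots,\tfrac{m-1/2}{m}\}$ and applying $Q_{\cbf_i}$; equivalently (using the quantile coupling $\cbf_i = Q_{\cbf_i}(U_i)$ for $U_i\sim\mathcal U[0,1]$), by Lemma~\ref{lemma_approx0} the events $\{\tilde\cbf_i/a_i > \crlz/a_i\}$ and $\{\cbf_i/a_i>\crlz/a_i\}$ differ with probability at most $\tfrac1{2m}$ for every threshold. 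More precisely, for each fixed $i$ and fixed $b$, I would compare $\hat x_i(b)$ and $\tilde x_i(b)$ by first changing only the distribution of component $i$ (leaving the other $n-1$ components at their original distributions) and then changing the remaining $n-1$ components one at a time. Each single-component swap changes the expected value of $x_i(b)\in[0,1]$ by at most the probability that the relevant pairwise comparison $\cbf_j/a_j \lessgtr \crlz/a_i$ flips, which by Lemma~\ref{lemma_approx0} is at most $\tfrac1{2m}$ — note $\tilde x_i(b)$ itself only depends on how $\cbf_i$ compares with each of the other $n-1$ items, so changing component $j\ne i$ affects it only through that single comparison. Summing over the $n-1$ other components (the change of component $i$ itself is handled analogously, via the representation~\eqref{eq:y_from_g}--\eqref{eq:g_finite} of $\hat x_i'$ as an average over thresholds $\crlz$ where $\tilde h$ replaces $h$), this yields $|\hat x_i(b)-\tilde x_i(b)|\le \tfrac{n-1}{2m}$.

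**Putting it together.** Combining the two bounds, for every $b\in\{0,\dots,A\}$,
$$|\hat f(b)-\tilde f(b)| \le \sum_{i=1}^n |d_i|\cdot\frac{n-1}{2m} = \frac{(n-1)D}{2m}\;,$$
and since the functions are piecewise linear with breakpoints at integers, the same bound holds for all $b\in[b^-,b^+]$. With the choice $m=\lceil\tfrac1\varepsilon (n-1)AD\rceil \ge \tfrac1\varepsilon (n-1)AD$, this gives $\|\hat f-\tilde f\|_\infty \le \tfrac{(n-1)D}{2m}\le \tfrac{\varepsilon}{2A}\le \tfrac\varepsilon2$ (using $A\ge 1$). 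Plugging into the sandwich inequality yields $|\hat f(\tilde b)-\hat f(b^*)|\le 2\cdot\tfrac\varepsilon2 = \varepsilon$, as claimed.

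**Main obstacle.** The delicate point is the hybrid/coupling argument bounding $|\hat x_i(b)-\tilde x_i(b)|$ by $\tfrac{n-1}{2m}$: one must argue carefully that each single-component replacement perturbs the expectation of the bounded random variable $x_i(b)$ by no more than the flip probability of a single threshold comparison, and in particular that replacing component $i$ does not contaminate the comparisons involving the other items. This is where the structure "$x_i^\cbf(b)$ depends on $\cbf$ only through pairwise ratio comparisons" (from Section~\ref{section_withoutuncertainty}) together with independence is essential; the rest is bookkeeping with Lemma~\ref{lemma_approx0} and the choice of $m$.
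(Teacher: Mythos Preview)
Your approach is correct and in fact sharper than the paper's, though your writeup glosses over one point about what~$\tilde f$ actually is.

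\textbf{Comparison with the paper.} The paper does not use a coupling/hybrid argument. Instead it defines $\delta_i(I,\gamma):=\max_b|h_i(b,I,\gamma)-\tilde h_i(b,I,\gamma)|$ and shows inductively, by expanding the recursion~\eqref{eq:dynamic_h} for~$h_i$ and~$\tilde h_i$ simultaneously and using Lemma~\ref{lemma_approx0}, that $\delta_i(I,\gamma)\le\delta_i(I\setminus\{j\},\gamma)+\tfrac1{2m}$, hence $\delta_i(\fot n\setminus\{i\},\gamma)\le\tfrac{n-1}{2m}$. This gives $|g_i-\tilde g_i|\le\tfrac{n-1}{2m}$, then the same bound on the \emph{slope} errors $|\hat x_i'(b)-\tilde x_i'(b)|$, and finally $|\hat x_i(b)-\tilde x_i(b)|\le b\cdot\tfrac{n-1}{2m}$ after summing~$b$ slope errors. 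Your hybrid argument instead bounds $|\hat x_i(b)-\tilde x_i(b)|$ directly: since $x_i^\cbf(b)\in[0,1]$ depends on~$\cbf$ only through the set $\{j\neq i:\cbf_j/a_j>\cbf_i/a_i\}$, swapping one component $j\neq i$ (conditioning on everything else, including~$\cbf_i$) changes the expectation by at most $|F_{\cbf_j}-F_{\tilde\cbf_j}|\le\tfrac1{2m}$. This yields $|\hat x_i(b)-\tilde x_i(b)|\le\tfrac{n-1}{2m}$ with no factor~$b$. Your bound is therefore stronger by a factor of~$A$; it would allow choosing $m=\lceil\tfrac1\varepsilon(n-1)D\rceil$ and thus a correspondingly faster algorithm than the paper claims.

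\textbf{One point to clean up.} The function~$\tilde f$ that the algorithm maximizes is \emph{not} the leader's objective for the fully discretized distribution~$\tilde\cbf$: in the definition of~$\tilde g_i$, the inner comparisons use~$\tilde\cbf_j$ for $j\neq i$, but the outer integration is against the original density~$p_i$ of~$\cbf_i$. Hence each~$\tilde x_i(b)$ corresponds to a different hybrid (original component~$i$, discretized components $j\neq i$). This is precisely why only $n-1$ swaps are needed and your bound $\tfrac{n-1}{2m}$ is exact; your aside about ``changing component~$i$ itself'' is unnecessary and should be dropped. The sandwich argument goes through unchanged once~$\tilde f$ is understood as this hybrid function, since all you need is that~$\tilde b$ maximizes it and that $\|\hat f-\tilde f\|_\infty\le\tfrac\varepsilon2$.
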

\begin{proof}
  For~$i\in\fot
  n$,~$I\subseteq\fot n\setminus\{i\}$, and~$\crlz \in \mathbb{R}$, define
  $$\delta_i(I,\crlz) := \max_b|h_i(b,I,\crlz)-\tilde h_i(b,I,\crlz)|\;.$$
  Then, using~\eqref{eq:dynamic_h} for any~$j\in I$, we obtain
  \begin{eqnarray*}
    && \hspace{-1.75em}\delta_i(I,\crlz) \\
    & = & \max_b ~ |(1-F_{\cbf_j}(\crlz\tfrac{a_j}{a_i}))h_i(b-a_j,I\setminus\{j\},\crlz)
    +F_{\cbf_j}(\crlz\tfrac{a_j}{a_i})h_i(b,I\setminus\{j\},\crlz)\\
    && \qquad~ -(1-F_{\tilde\cbf_j}(\crlz\tfrac{a_j}{a_i}))\tilde h_i(b-a_j,I\setminus\{j\},\crlz)
    -F_{\tilde\cbf_j}(\crlz\tfrac{a_j}{a_i})\tilde h_i(b,I\setminus\{j\},\crlz)|\\
    & = & \max_b ~ |(1-F_{\tilde\cbf_j}(\crlz\tfrac{a_j}{a_i}))(h_i(b-a_j,I\setminus\{j\},\crlz)-\tilde h_i(b-a_j,I\setminus\{j\},\crlz))\\
    && \qquad~+F_{\tilde\cbf_j}(\crlz\tfrac{a_j}{a_i})(h_i(b,I\setminus\{j\},\crlz)-\tilde h_i(b,I\setminus\{j\},\crlz))\\
    && \qquad~+(F_{\tilde\cbf_j}(\crlz\tfrac{a_j}{a_i})-F_{\cbf_j}(\crlz\tfrac{a_j}{a_i}))
    (h_i(b-a_j,I\setminus\{j\},\crlz)-h_i(b,I\setminus\{j\},\crlz))|\\
    & \le & \max_b ~ (1-F_{\tilde\cbf_j}(\crlz\tfrac{a_j}{a_i}))\cdot|h_i(b-a_j,I\setminus\{j\},\crlz)-\tilde h_i(b-a_j,I\setminus\{j\},\crlz)|\\
    && \qquad~+F_{\tilde\cbf_j}(\crlz\tfrac{a_j}{a_i})\cdot|h_i(b,I\setminus\{j\},\crlz)-\tilde h_i(b,I\setminus\{j\},\crlz)|\\
    && \qquad~+\underbrace{|F_{\tilde\cbf_j}(\crlz\tfrac{a_j}{a_i})-F_{\cbf_j}(\crlz\tfrac{a_j}{a_i})|}_{\le \tfrac 1{2m} \text{ (Lemma~\ref{lemma_approx0})}}\cdot
    \underbrace{|h_i(b-a_j,I\setminus\{j\},\crlz)-h_i(b,I\setminus\{j\},\crlz)|}_{\le 1}\\
    & \le & \max_b ~ (1-F_{\tilde\cbf_j}(\crlz\tfrac{a_j}{a_i}))\cdot\delta_i(I\setminus\{j\},\crlz)+F_{\tilde\cbf_j}(\crlz\tfrac{a_j}{a_i})\cdot\delta_i(I\setminus\{j\},\crlz)+\tfrac 1{2m}\\
    & = & \delta_i(I\setminus\{j\},\crlz)+\tfrac 1{2m}\;,
  \end{eqnarray*}
  which by induction implies~$\delta_i(I,\crlz)\le \tfrac {n-1}{2m}$ for
  all~$I\subseteq\fot n\setminus\{i\}$ and all~$\crlz\in\mathbb{R}$.
  We thus obtain
  \begin{align*}
    |g_i(b,I)-\tilde g_i(b,I)|
    &\le \int_{0}^{\infty} p_i(\crlz)|h_i(b,I,\crlz)-\tilde h_i(b,I,\crlz)|\; d \crlz \\
    &\le \tfrac {n-1}{2m}\int_{0}^{\infty} p_i(\crlz)\; d \crlz = \tfrac {n-1}{2m}
  \end{align*}
  for all~$i \in \fot n$, $I \subseteq \fot n \setminus \{i\}$, and~$b
  \in \{0, \dots, A\}$.  It follows from~\eqref{eq:y_from_g} that the
  resulting additive error in~$\Delta\hat x_{i}(b)$ is at most~$\tfrac
  {n-1}{2m}$ and thus the additive error in~$\hat x_i(b)$ is at
  most~$b\tfrac {n-1}{2m}$ for all~$i\in\fot n$
  using~\eqref{eq:haty}. Finally, the additive error in~$\hat f(b)$,
  for any~$b$, is bounded by
  $$\sum_{i=1}^n|d_i|b\tfrac {n-1}{2m}\le \tfrac {(n-1)AD}{2m} \le \tfrac
  12\varepsilon\;,$$ which implies the desired result.
\end{proof}

It is easy to see that the proof of Lemma~\ref{lemma_approx} also
works when all or some components of~$\cbf$ follow a discrete
distribution. For this, it suffices to adapt the definition of~$\tilde
g_i$ and the estimation of its error. Altogether, we thus obtain
\begin{theorem}
  Assume that the components of~$\cbf$ are independently distributed
  and that the distribution of each component~$\cbf_i$ is given by
  oracles for its cumulative distribution function and its quantile
  function. Then there exists a fully pseudo-polynomial time additive
  approximation scheme for~\eqref{stochasticprob}.
\end{theorem}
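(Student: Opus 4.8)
The plan is to assemble the ingredients already developed above: the discretization~$\tilde\cbf_i$ of each component, the pseudo-polynomial evaluation of the perturbed quantities~$\tilde h_i$ and~$\tilde g_i$ and of the corresponding approximate objective via the machinery behind Theorem~\ref{theorem_complexity_finite}, and the error analysis of Lemmas~\ref{lemma_approx0} and~\ref{lemma_approx}.

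First I would fix~$\varepsilon>0$, set $m:=\lceil\frac1\varepsilon(n-1)AD\rceil$ with $D=\sum_{j=1}^n|d_j|$, and replace each~$\cbf_i$ by the random variable~$\tilde\cbf_i$ uniform on~$\{\tilde c_i^1,\dots,\tilde c_i^m\}$ with $\tilde c_i^k=Q_{\cbf_i}(\frac{k-1/2}{m})$, which uses only the quantile oracle. The structural point that makes this efficient is that, viewed as a function of~$\crlz$, each~$\tilde h_i(b,\fot n\setminus\{i\},\crlz)$ is piecewise constant with all breakpoints in the finite set~$J_i$ of size~$\mathcal{O}(mn)$; hence it suffices to run the recursion~\eqref{eq:dynamic_h}, exactly as in the proof of Lemma~\ref{lemma_h_finite}, at each of the~$\mathcal{O}(mn)$ values~$\crlz\in J_i$, where each probability~$\mathbb{P}(\tilde\cbf_j/a_j>\crlz/a_i)$ is obtained in constant time from the CDF oracle for~$\cbf_j$ by rounding~$F_{\cbf_j}$ to the nearest multiple of~$\frac1m$. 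From the~$\tilde h_i$ one computes~$\tilde g_i(b,\fot n\setminus\{i\})$ as the finite sum $\sum_{k=0}^r\tilde h_i(b,\fot n\setminus\{i\},j_k)(F_{\cbf_i}(j_{k+1})-F_{\cbf_i}(j_k))$, again with one CDF oracle call per term, and then~$\hat x'$, $\hat x$ and the objective exactly as in the proof of Theorem~\ref{theorem_complexity_finite}; maximizing the resulting piecewise linear function over~$[b^-,b^+]$ produces~$\tilde b$. Tracking the cost of these steps gives the overall running time~$\mathcal{O}(mn^3A)=\mathcal{O}(\frac1\varepsilon n^4A^2D)$, which is pseudo-polynomial in the problem data and linear in~$\frac1\varepsilon$, that is, a fully pseudo-polynomial additive approximation scheme.

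For correctness I would invoke Lemma~\ref{lemma_approx0} to get $\|F_{\cbf_i}-F_{\tilde\cbf_i}\|_\infty\le\frac1{2m}$ and then Lemma~\ref{lemma_approx}, whose argument propagates this per-component error through~\eqref{eq:dynamic_h} — collapsing to~$|h_i-\tilde h_i|\le\frac{n-1}{2m}$ — and onward through~\eqref{eq:y_from_g}, \eqref{eq:haty}, \eqref{eq:f} to bound the pointwise discrepancy between the approximate objective~$\tilde{\hat f}$ and the true objective~$\hat f$ by~$\frac{(n-1)AD}{2m}\le\frac\varepsilon2$. The usual sandwich argument then finishes it: if~$b^*$ maximizes~$\hat f$ and~$\tilde b$ maximizes~$\tilde{\hat f}$, then $\hat f(b^*)-\frac\varepsilon2\le\tilde{\hat f}(b^*)\le\tilde{\hat f}(\tilde b)\le\hat f(\tilde b)+\frac\varepsilon2$, so $|\hat f(b^*)-\hat f(\tilde b)|\le\varepsilon$. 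Finally I would observe that continuity of~$\cbf_i$ entered only through the integral expressing~$g_i$ and~$\tilde g_i$; replacing that integral by the corresponding sum handles components with a (possibly infinite) discrete distribution verbatim, with the same~$\frac{n-1}{2m}$ bound, so the scheme applies to arbitrary independently distributed components given by the two oracles.

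The one point that really needs care — though it is by now not a true obstacle, being exactly the content of Lemma~\ref{lemma_approx} — is that the per-component CDF error must accumulate only \emph{linearly} through the~$n-1$ compositions of the dynamic program; this is what makes the modest choice~$m=\Theta(\frac1\varepsilon nAD)$ sufficient and keeps the running time linear in~$\frac1\varepsilon$. Beyond that, the final write-up only has to get the bookkeeping right: that every oracle call is counted, that~$J_i$ genuinely captures all discontinuities of~$\tilde h_i$, and that the discrete-case variant of~$\tilde g_i$ obeys the same error estimate.
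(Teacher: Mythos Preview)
Your proposal is correct and follows essentially the same approach as the paper: the theorem in the paper is stated as the culmination of Section~\ref{approximation_scheme}, and your write-up assembles precisely those ingredients (the discretization via the quantile oracle, the evaluation of~$\tilde h_i$ at the breakpoints~$J_i$ and of~$\tilde g_i$ via the CDF oracle, the running-time bookkeeping, and the error control from Lemmas~\ref{lemma_approx0} and~\ref{lemma_approx}) in the same order and with the same bounds. The only cosmetic difference is that you spell out the ``sandwich'' inequality explicitly, whereas the paper folds it into the statement and proof of Lemma~\ref{lemma_approx}.
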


\section{Conclusion}\label{section_conclusion}

We have settled the complexity status of the stochastic bilevel
continuous knapsack problem with uncertain follower's item values for
different types of distributions. In case of a distribution with
finite and explicitly given support, the problem is tractable. If the
item values are independently and uniformly distributed, the problem
is \#P-hard in general, both for continuous and discrete
distributions, but we devise pseudo-polynomial algorithms for both
cases. Finally, we present a fully pseudo-polynomial additive
approximation scheme for the case of arbitrary distributions with
independent item values.

For the (single-level) binary knapsack problem, it is a classical
result that the well-known pseudo-polynomial algorithm can be turned
into an FPTAS by a natural rounding
approach~\cite{ibarrakim}. Unfortunately, such an approach fails for
the pseudo-polynomial algorithm presented in
Section~\ref{section_pseudo_alg}. In fact, by
Theorem~\ref{theorem_noapprox}, an FPTAS for the stochastic bilevel
continuous knapsack problem under the distributions considered in
Section~\ref{componentwise_uniform} cannot exist, at least not in the
multiplicative sense.

Finally, several interesting variants of the stochastic bilevel
continuous knapsack problem arise when replacing the expected value
in~\eqref{stochasticprob} by some other risk measure, e.g., by a
higher moment or a combination of different moments. This would allow
to take also the variance into account, as is common in mean-risk
optimization. Also other risk measures such as the conditional value
at risk may be considered. For a collection of related results along
the lines of Section~\ref{section_finite} and
Section~\ref{section_stochastichard}, see~\cite{warwel}.

\bibliographystyle{abbrv}
\bibliography{literature_stoch}

\end{document}